\let\ep=\varepsilon
\newcommand\X{\ensuremath{\mathcal X}} 
\newcommand\Y{\ensuremath{\mathcal Y}} 
\newcommand\Z{\ensuremath{\mathcal Z}} 
\newcommand\Ss{\ensuremath{\mathcal S}}
\renewcommand\S{\Ss}
\newcommand\D{{\mathcal D}} \newcommand\Oo{{\mathcal O}}
\newcommand\del{weakly deletion-compliant}
\newcommand\dels{weak deletion-compliance}
\newcommand\DELs{Weak Deletion-Compliance}
\renewcommand\Pr{{\rm Pr}}
\newcommand\N{\mathbb N}
\newcommand\naturals{\mathbb{N}}
\newcommand\bool{\{0,1\}}
\newcommand\joli[1]{{\cal #1}}
\newcommand\tounder[1]{\buildrel #1 \over \longrightarrow}
\def\PR#1{\PPR#1\fin}
\def\PPR#1,#2,#3\fin{\Big| \Pr\big[#1(#2)=1\big] - \Pr\big[#1(#3)=1\big]\Big|}
\title{Deletion-Compliance in the Absence of Privacy\thanks{This is the full
    version of \cite{9647774}.}}
\author{Jonathan Godin\inst{1}\thanks{This work was done while visiting the National Research Council Canada.} \and Philippe Lamontagne\inst{2}}
\institute{Université de Montréal, Montréal, Canada \and National
Research Council Canada, Ottawa, Canada}
\begin{document}

\maketitle

\begin{abstract}
Garg, Goldwasser and Vasudevan
(Eurocrypt 2020) invented the notion of deletion-compliance to formally model the ``right to be forgotten'', a concept that confers individuals  more control over their digital data. A requirement of deletion-compliance is strong privacy for the deletion requesters since no outside observer must be able to tell if deleted data was ever present in the first place.
Naturally, many real world systems where information can flow across users are automatically ruled out.\\
The main thesis of this paper is that deletion-compliance is a standalone notion, distinct from privacy. 
We present an alternative definition that meaningfully captures deletion-compliance without any privacy implications.
This allows broader class of data collectors to demonstrate compliance to deletion requests and to be paired with various notions of privacy. 
Our new definition has several appealing properties:
  \begin{itemize}
  \item It is implied by the stronger definition of \citeauthor{GGV} under
    natural conditions, and is equivalent when we add a strong privacy requirement. 
  \item It is naturally composable with minimal
    assumptions.
  \item Its requirements are met by data structure implementations that do not reveal the order of operations, a concept known as history-independence.
  \end{itemize}
  Along the way, we discuss the many challenges that remain in
  providing a universal definition of compliance to the ``right to be
  forgotten.''
\end{abstract}


\section{Introduction}
\label{sec:intro}

Our increasingly online lives generate troves of personal data from our digital interactions.
The widespread collection of this user-generated data for marketing and other purposes has lead in response to the concept of ``right to be forgotten''. This \emph{right} stipulates that service providers do not hold onto data beyond its useful lifetime, and that users should retain ownership and control over their data and may, for example, ask for all copies and derivatives of their data to be deleted. This right has been codified in a handful of legislatures: the European General data Protection Regulation~\cite{GDPR} enshrines into law the right to erasure, Argentina's courts have made decisions that support the right to be forgotten~\cite{carter_argentina}, and California enacted a law requiring businesses to comply with data deletion requests~\cite{CCPA}. However, laws are by their very nature vague and open to interpretation. 
 In response, researchers began to frame privacy laws in the formal language of cryptography~\cite{nissim_bridging_2017,cohen_towards_2020} to facilitate compliance. 

The present work continues this vein of research in the context of the ``right to be forgotten'', that was recently pioneered by \citeauthor{GGV} \cite{GGV}. They have defined a notion of \emph{deletion-compliance} that sets formal guidelines for service providers -- that we will hereby refer to as \emph{data-collectors} -- to satisfy in order to comply with data deletion requests in accordance with the law. Deletion-compliance is not in itself a formal representation of a law, but rather a sufficient criteria to comply with the spirit of the law. In the formalism of~\cite{GGV}, a data-collector $\X$ interacts with two other entities: the \emph{deletion-requester} $\Y$ that represents the set of users that request deletion of their data, and the \emph{environment} $\Z$ that represents other users and any other entity. Interactions with the data-collector are described via a \emph{protocol} $\pi$ to which is associated a corresponding deletion protocol $\pi_D$. \citeauthor{GGV} define deletion-compliance using the real vs ideal paradigm. In the \emph{real} execution, the parties $\X$, $\Y$ and $\Z$ interact as prescribed by $\pi$ and $\pi_D$ and by their own programs. The restrictions are that every protocol $\pi$ between $\Y$ and $\X$ is eventually followed  by the associated deletion protocol $\pi_D$, and that $\Y$ is not allowed to communicate to $\Z$. Compared to this real execution is an \emph{ideal} execution that differs by replacing $\Y$ with a \emph{silent} variant $\Y_0$ that does not communicate with $\X$ nor $\Z$. The triple $(\X,\pi,\pi_D)$ is said to be deletion-compliant if for any $\Y$ and $\Z$, the state of $\X$ and the view of $\Z$ resulting from a real execution cannot be distinguished from those resulting from an ideal execution.

One consequence of the above framework is that for a data-collector to be deletion-compliant, it must \emph{perfectly} protect the privacy of its users.
The view of the environment, an entity that captures all third
parties (other than the deletion-requester and the data-collector), is included in the
indistinguishability requirement. This forces strict
\emph{privacy} of the data entrusted to the collector. This property was endorsed by the authors of~\cite{GGV}: they argue that a data-collector that discloses information from its users loses control over that information and thus the ability to delete every copy.
However, one can see from the definition that leakage of a single bit of information to the environment -- whether $\Y$ is silent or not -- is sufficient to distinguish the real and ideal executions.
This criteria sets the bar very high in terms of privacy requirements.
For example, it automatically rules out deletion-compliance of data-collectors that reveal some user information \emph{as part of their core functionality}. This includes any service that involves the sharing of user-created content such as social networks, messaging apps, etc. Moreover, even well intended data-collectors can leak information in subtle ways. For example, machine learning models\footnote{The full version of~\cite{GGV} gives an example of deletion-compliant data-collector that can train and delete from a machine learning model, but it does not have a \emph{public} interface for making predictions with the machine learning model since it could leak data from $\Y$ to $\Z$.} reveal information at inference time about the training dataset~\cite{shokri_membership_2017,wang_beyond_2019}, unless protective measures are taken~\cite{dwork_calibrating_2006,al-rubaie_privacy-preserving_2019}.

Motivated by these observations, we propose an alternative \emph{weaker}, more realistic, definition that builds on top of their framework yet abstracts away any notion of privacy, focusing instead on capturing what it means to honestly comply with
deletion requests. It enables a broader class of data-collectors to demonstrate compliance to the right-to-be-forgotten and to be paired with any suitable notion of privacy. 
Our definition  is \emph{simulation-based} and inspired by the concept of \emph{zero-knowledge} for interactive proof systems \cite{goldwasser_knowledge_1989}. Intuitively, just as the existence of a simulator for the verifier in zero-knowledge proofs shows that the verifier \emph{gains} no knowledge from the proof, the existence of a simulator for the data-collector shows that the data-collector \emph{retains} no knowledge after deletion. A caveat to this analogy is that since we do not enforce privacy for the deletion-requester, the data-collector may retain some information on $\Y$ that it has learned through other means -- e.g. by interacting with the environment $\Z$. To model this possibility, we give the view of $\Z$ to the simulator that we task with producing the state of $\X$. We say that $(\X,\pi,\pi_D)$ is \emph{\del{}} if the simulator, given the view of $\Z$ as input, produces an output indistinguishable from the state of $\X$.

\paragraph{Limits of Cryptography for Modeling the ``Right to be Forgotten''.} There is a fundamental limit to the formal treatment of the concept of ``right to be forgotten''. Formal definitions can only capture the deletion of data for which the deletion requester can demonstrate \emph{ownership} over the data. One of the main motivators for such laws is to prevent forms of online abuse such as defamatory content or non-consensual sharing of adult content. The notion of \emph{dereferencing} -- the removal of data about an individual that was uploaded without consent -- falls outside the scope of this work. See Section~\ref{sec:related-work} below for references to work that propose solutions to this other problem.

Our work and that of \cite{GGV} assume honest behaviour of the data collector. Unless the data collector has a quantum memory~\cite{broadbent_quantum_2020}, this assumption is necessary when modelling compliance to data deletion request. We stress that our notion of deletion-compliance is independent of any notion of privacy, and in particular does not excuse the data collector from respecting the user's privacy.

\subsection{Contributions \& Comparison with \cite{GGV}}
\label{sec:comp-with}

Our work is in continuity with that of \citeauthor{GGV} and builds on the same framework. They argue that for a data collector to retain the ability to delete data, it must not share data with any other party that does not also provide guarantees for the deletion of this data.  Our definition demonstrates that this is more a design choice than a necessary condition, and that a meaningful notion of deletion-compliance exists in the absence of total privacy. Our weaker definition more closely matches the behavior that we expect from real-world data-collectors that comply with ``right to be forgotten'' laws.

Our approach has several advantages over the stronger definition of~\cite{GGV}, which from now on we refer to as \emph{strong deletion-compliance}.

\paragraph{Composability.}

A consequence of the strong privacy requirement is that delegating
data storage or computation becomes impossible. The solution
of~\cite{GGV} to this problem is to require that any third party $\X'$, which receives information on $\Y$ and is modeled as part
of the environment $\Z$, also respects some form of 
deletion-compliance.  The system as a whole then satisfies a notion called
\emph{conditional deletion-compliance}.  Intuitively, a data-collector
$\X$ that shares some data with $\X'$ is conditionally
deletion-compliant if $\X'$ is deletion-compliant for the
protocol describing the interaction of $\X$ and $\X'$.  This solution is
unsatisfactory for the main reason that it adds an assumption on the environment $\Z$ (containing $\X'$), which is modeled as an adversary whose goal it is to learn about $\Y$. In constast, our definition composes sequentially with no assumption on the environment or deletion-requester and only mild assumptions on the honest data-collector.

Parallel composition of strong deletion-compliance also suffers from the stark privacy requirement: it composes in parallel only for data-collectors and deletion-requesters \emph{that do not communicate with each other}. Our weaker definition is easily shown to compose in parallel (for either deletion-requester or data-collector) without additional assumptions.

\paragraph{Ease of Compliance.}

We argue that the definition of strong deletion-compliance is too restrictive for most real-world applications. We give examples of data-collectors, some natural and some designed to prove our point, that satisfy the \emph{spirit} of the ``right to be forgotten'' (and our definition), but that are not strongly deletion-compliant. 

\citeauthor{GGV} give as example a specific deletion-compliant data-collector built from a \emph{history independent} dictionary. An implementation of an abstract data structure (ADS) is history independent if any two sequences of operations that lead to the same state for the ADS also lead to the same memory representation of the implementation.
 In contrast, we show that this is a general property of our definition: if a data-collector is implemented using exclusively history independent data structures, then it is \del{}.  

Our simulation-based definition also paves the way for a \emph{constructive} way to prove compliance to data-deletion requests.  For example, the source code for a data-collector may be published together with the code for the corresponding simulator. An independent auditor could then run experiments to heuristically verify the claimed \dels{}.

\paragraph{Observations on Defining Deletion-Compliance.}

We elaborate and add to the discussions initiated by \cite{GGV} on issues that arise when attempting to define deletion-compliance. We highlight several subtleties that practitioners may face when trying to comply with (strong or weak) deletion-compliance. More precisely, we identify problems that occur when randomness is used in certain ways by the data-collector, and we point to subtle ways in which information on $\Y$ may remain in the data-collector post-deletion. We offer partial solutions to those issues and leave a definitive resolution to future work.

\subsection{Related Work}
\label{sec:related-work}

The formal treatment of data privacy laws using ideas inspired by cryptography has been investigated outside of this work and~\cite{GGV}. To the best of our knowledge, \citeauthor{nissim_bridging_2017}~\cite{nissim_bridging_2017} is the first work using the formal language of cryptography to model privacy laws written in an ambiguous legal language. They formalize the FERPA privacy legislation from the United States as a cryptographic game-based definition and prove that the notion of differential privacy~\cite{dwork_calibrating_2006} satisfies this definition. \citeauthor{cohen_towards_2020}~\cite{cohen_towards_2020} tackle this task of bridging legal and formal notions of privacy for the GDPR legislation~\cite{GDPR}. They provide a formal definition of the GDPR's concept  ``singling out'' and show that differential privacy implies this notion, but that another privacy notion, $k$--anonymity~\cite{Samarati98protectingprivacy}, does not.

The notion that programs and databases (unintentionally) retain information beyond its intended lifetime is not new. The following papers focus on the setting of users interacting with programs on their own machines and ensuring that data cannot be recovered if the machine is compromised. \citeauthor{stahlberg_threats_2007}~\cite{stahlberg_threats_2007} propose desired properties for forensically transparent systems where all data should be accessible through legitimate interfaces and not through forensic inspection of the machine state. They investigate common database implementations and find that they are vulnerable to forensic recoverability, then propose measures to reduce unintended data retention.
\citeauthor{kannan_making_2011}~\cite{kannan_making_2011} propose a technique to ensure that sensitive data does not linger in a program's memory state after its useful lifespan. They take snapshots of the application state and log events so that the program can be rewinded to a previous state and actions replayed to arrive at an ``equivalent'' state where the sensitive information is gone.
A survey paper by \citeauthor{reardon_sok_2013}~\cite{reardon_sok_2013} reviews methods for securely deleting data from physical mediums. As such they are less interested in removing from a system all traces of data from a particular user, but to ensure that data marked for deletion cannot be recovered with forensics.
\citeauthor{ArkemaSherr}~\cite{ArkemaSherr} introduce the concept of residue-free computing that provides any application with an \emph{incognito mode}  preventing any trace data from being recorded on disk. 
 
As previously mentioned,
our paper is only concerned with the case where the data being requested for deletion originated from the deletion-requester. 
\citeauthor{simeonovski_oblivion_2015}~\cite{simeonovski_oblivion_2015} propose a framework for deletion requests of an individual's data that has another source (e.g.\ news or social media). It automates the process of detecting personal information publicly available online, requesting deletion and proving eligibility of the deletion request. \citeauthor{derler_i_2019}~\cite{derler_i_2019} propose a cryptographic solution to data deletion in a distributed setting based on a new primitive called identity-based puncturable encryption.

\section{Preliminaries}
\label{sec:prelim}

Throughout the paper, we use $\lambda\in\naturals$ as a security parameter. We write $\ppt{}$ as shorthand for ``probabilistic polynomial time'' and we let $\negl[\lambda]$ denote an arbitrary negligible function, i.e. such that for every $k\in\naturals$, there exists $\Lambda>0$ such for any $\lambda\geq \Lambda$, $\negl[\lambda]< \frac 1{\lambda^{k}}$.

 For two families of random variables ${\cal A}=\{A_\lambda\}_{\lambda\in\naturals}$ and ${\cal B}=\{B_\lambda\}_{\lambda\in\naturals}$. We say that $\cal A$ and $\cal B$ are $\ep$--\emph{indistinguishable} (resp. computationally $\ep$--indistinguishable) and write $\mathcal{A} \approx^S_{\ep} \mathcal{B}$ (resp. $\mathcal{A} \approx^C_{\ep} \mathcal{B}$) to mean that for 
  every unbounded (resp. \ppt{}) distinguishers $\D$, for all $\lambda\in\naturals$
  \begin{equation}
  \def\PR#1{\PPR#1\fin}
  \def\PPR#1,#2,#3\fin{\Big| \Pr\big[#1(#2)=1\big] - \Pr\big[#1(#3)=1\big]\Big|}
    \PR{\D,A_\lambda,B_\lambda} \leq \ep(\lambda).\label{eq:3}
  \end{equation}
  We simply write $\mathcal{A}\approx^S\mathcal{B}$ (resp. $\mathcal{A}\approx^C \mathcal{B}$) and say $\cal A$ and $\cal B$ are (computationally) indistinguishable when $\ep$ is a negligible
  function of $\lambda$ and even omit the superscript when clear from the context.
We let $\Delta^S(\mathcal{A}, \mathcal{B})$  (resp.~$\Delta^C(\cdot,\cdot)$) denote the supremum of~(\ref{eq:3}) over the choice of unbounded (resp.~$\ppt{}$) distinguisher $\D$.
This notion satisfies the triangle inequality: $\Delta(\mathcal{A},\mathcal{B})\leq \Delta(\mathcal{A},\mathcal{C}) + \Delta(\mathcal{C} + \mathcal{B})$ for $\Delta\in\{\Delta^S,\Delta^C\}$. 

\subsection{The Execution Model}
\label{sec:model-execution}

We use essentially the same formalism as in~\citeauthor{GGV}. We review it here and modify it slightly to suit our needs.  Parties involved in an execution are modeled as \emph{Interactive Turing Machines} (ITM) with several tapes.
\begin{definition}[Copied from \cite{GGV}]
  An interactive Turing Machine (ITM) is a Turing Machine $M$ with the following tapes
  \begin{enumerate*}[(i)]
  \item a read-only \emph{identifier} tape;
  \item a read-only \emph{input} tape;
  \item a write-only \emph{output} tape;
  \item a read-write \emph{work} tape;
  \item a single-read-only \emph{incoming} tape;
  \item a single-write-only \emph{outgoing} tape;
  \item a read-only \emph{randomness} tape; and
  \item a read-only \emph{control} tape.
  \end{enumerate*}
  
  The \emph{state} of an ITM at any given point in its execution, denoted $state_M$, consists of the content of its work tape at that point. Its \emph{view}, denoted by $view_M$, consists of the contents of its input, output, incoming, outgoing, randomness and control tapes at that point. 
\end{definition}
ITMs are static objects that are instantiated by \emph{instances of ITMs} (ITIs). Throughout this paper, we may refer to ITIs simply as ``machines''. Each ITI is equipped with a unique identifier and may be under the control of another ITI, in which case the \emph{controlling} ITI can write on the \emph{controlled} ITI's control and input tape and read its output tape. ITIs can be engaged in a \emph{protocol} by writing on each other's incoming tapes in a way prescribed by that protocol. Any message an ITI writes on the incoming tape of another ITI is also copied on its own outgoing tape. A protocol describes the actions of each participating ITI and what they write on each other's incoming tapes. An instantiation of a protocol is called a \emph{session} uniquely identified by a unique session identifier $sID$. Each ITI engaged in a session of a protocol is attributed a unique party identifier $pID$ for that particular session. To each session of $\pi$ is associated a \emph{deletion token} that is a function of $sID$ and of the content of the incoming and outgoing tapes of the machines engaged in that session. The deletion token will be used as input to $\pi_D$ to ask for the deletion of the information stored during session $sID$ of $\pi$.

Note that the protocol $\pi$ can in reality represent many possible protocols $(\pi_1,\dots,\pi_n)$ by having the machine that initiates $\pi$ begin its first message with the index $i$ of which $\pi_i$ it wants to run. We can assume without loss of generality that $\pi$ is a two-round protocol, i.e. the machine that initiates $\pi$ first sends one message (writes to the other machine's incoming tape) and the other machine responds. To implement longer interactions, the machines can use the session identifier to resume the interaction later in an asynchronous manner.

\paragraph{Special ITIs.}

We consider three special ITIs: the data-collector $\X$, the deletion-requester $\Y$ and the environment $\Z$. $\Y$ represents the clients that will request deletion and $\Z$ represents the rest of the world and any other machines interacting with $\X$ that may or may not request deletion. Instead of directly engaging in protocol sessions with each other, these special ITIs create new ITIs under their control to interact in this session. This way, $\X$ has no way of knowing if it is interacting with $\Y$ or $\Z$. By abuse of terminology, we say that $\Y$ (or $\Z$) initiates a protocol with $\X$ if both machines instantiate ITIs that engage in a protocol session.  

The state or view of these special ITIs is the concatenated states or views of every ITI under its control. We assume w.l.o.g.\ that every ITI $\Z$ creates only engages in a single session of $\pi$ since $\Z$ can task these ITIs to write the content of all of their tapes on their output tape and $\Z$ can provide these tapes to newly instantiated ITIs.

\paragraph{Execution Phases.}

The execution happens in two phases. In the \emph{alive} phase, the special ITMs $\X$, $\Y$ and $\Z$ are instantiated with the security parameter $\lambda\in\naturals$ written on their input tape in unary and their randomness tapes initialized with a stream of random bits. The execution consists of a sequence of activations of $\X$, $\Y$ and $\Z$ and their controlled ITIs. An activated ITI runs until:
\begin{itemize}
\item It writes on the incoming tape of another machine; that machine then becomes activated.
\item It writes on its own output tape and halts; the machine that created that ITI becomes activated.
\item It creates a new ITI; that new ITI becomes activated.
\end{itemize}
Only one machine is activated at any single moment, the others are ``paused'', i.e. they do not read/write from/to any of their tapes. The order in which the ITIs are activated, with the exception of the above special cases, is under the control of the environment.

In the \emph{terminate} phase, each ITI created by $\Y$ that initiated a session of $\pi$ or $\pi_D$ with $\X$ is activated until it halts. For every session of $\pi$ for which the corresponding $\pi_D$ has not been initiated, an ITI is created by $\Y$ to initiate a session of $\pi_D$ with the corresponding deletion token and is executed until it halts.

\paragraph{Differences with \cite{GGV}.}

\citeauthor{GGV} consider two distinct executions: the real execution described above and an \emph{ideal} execution where $\Y$ is replaced with a \emph{silent} deletion-requester $\Y_0$ that does not initiate any protocol with $\X$. Since our definition is concerned with simulation and not with a real vs ideal scenario, we do not use this ideal execution, except when dealing directly with the definition of strong deletion-compliance.

Moreover, since the original definition of strong deletion-compliance requires the indistinguishability of the view of the environment in the real and ideal executions, $\Y$ is not allowed to write on the incoming tape of \Z's controlled ITIs. Our definition does not imply nor require privacy from the environment, therefore we lift that restriction.

We introduce the following new elements of notation. For two ITIs $M$ and $M'$, we denote by $(M,M')$ a new ITI that controls both $M$ and $M'$ (assuming they are not already under the control of another machine). Furthermore, we denote by $view_{M}^{M'}$ the subset of $view_M$ restricted to the interactions between $M$ and $M'$ (or of ITIs under their respective control).

\subsection{Strong Deletion-Compliance}
\label{sec:strong-delet-compl}

For completeness, we present the definition of strong deletion-compliance of~\cite{GGV}.
\begin{definition}[Strong Deletion-Compliance]\label{def:strong-delet-compl}
  Given a data-collector $(\X,\pi,\pi_D)$, an environment $\Z$ and a deletion-requester $\Y$, let $(state_\X^R,view_\Z^R)$ denote the state of $\X$ and view of $\Z$ in the real execution, and let $(state_\X^I,view_\Z^I)$ the corresponding variables in the ideal execution where $\Y$ is replaced with a special machine $\Y_0$ that does nothing -- simply halts when it is activated. We say that $(\X,\pi,\pi_D)$ is \emph{strongly statistically (resp. computationally) deletion-compliant} if, for all \ppt{} environment $\Z$, all \ppt{} deletion-requester $\Y$, and for all unbounded (resp. \ppt{}) distinguishers $\D$, there is a negligible function $\ep$ such that for all $\lambda\in\naturals$:
  \begin{equation}
    \label{eq:1}
    \left| \Pr[\D(state^R_\X,view^R_\Z)=1]- \Pr[\D(state^I_\X,view^I_\Z)=1] \right| \leq \ep(\lambda)\enspace.
  \end{equation}
\end{definition}

\section{A More Realistic Notion of Deletion-Compliance}
\label{sec:definition-dels}

We now present our proposed definition of deletion-compliance. The definition is \emph{simulation}-based, a data-collector $\X$ is weakly deletion-compliant if there is a simulator $\S$ that can produce the state of $\X$ from the view of the environment. 

\begin{definition}[\DELs]\label{def:weak del}
  Let $(\joli X,\pi,\pi_D)$ be a data-collector, $\Y$ be a deletion-requester and $\Z$ be an environment.  Let $state_{\joli X}$ and $view_{\Z}^\X$ respectively denote the state of $\joli X$ and the view of the interaction of $\Z$ and $\X$ after the terminate phase.
  We say that
  $(\joli X,\pi,\pi_D)$ is \emph{computationally (resp.~statistically)
    $\ep$--\del} if there exists a simulator $\Ss$ such that for all
  \ppt{} (resp.~unbounded) environment $\joli Z$, all \ppt{} deletion-requester $\joli Y$
  and all \ppt{} (resp.~unbounded) distinguishers $D$, there exists a
  negligible function $\ep$ such that for all $\lambda\in \N$
  \begin{equation}
    \Big|\Pr\big[ \D(state_{\X},view_{\Z}^{\X})=1\big]
      - \Pr\big[ \D\big(\S(view_{\joli Z}^\X),view_{\Z}^{\X}\big)=1\big]\Big| \leq \ep(\lambda)
    \end{equation}
    where the probability is over the random tapes of $\X,\Y,\Z$ and $\D$. We say that  $(\joli X,\pi,\pi_D)$  is computationally (resp.\ statistically) \del{} if it is computationally (resp.\ statistically) $\negl[\lambda]$--\del{}.

We define the \dels{} \emph{error} naturally as smallest $\ep\colon \N\to[0,1]$ such that $(\X,\pi,\pi_D)$ is $\ep$--\del{}.
\end{definition}

Intuitively, our definition asserts that the state of $\X$ after deletion can be completely described by its interaction with the environment. The key distinction with Definition~\ref{def:strong-delet-compl} is that we no longer require that the environment contains no information on $\Y$ by distinguishing between a real execution and an ideal execution. In essence, everything that $\X$ may retain about $\Y$ after deletion is something that is already known to the environment. By giving $view_\Z^\X$ as input to the distinguisher, we ensure that the simulator outputs a state for $\X$ that is consistent with the view of the environment.

For the purpose of transparency, we note that there are trivial ways to modify any data-collector $\X$ to satisfy our definition. For example, consider a data-collector that, in each session of protocol $\pi$, sends the content of its work tape to the other machine. Obviously this data collector is \del{} since a simulator has direct access to the state of $\X$ in the view of $\Z$.
However, since our definition is concerned with \emph{honest} data collectors and is intended to model real-world systems, we do not concern ourselves with this sort of worst case behavior of $\X$. Similarly, we do not worry about the case where $\Y$ sends all its data to $\Z$. Our definition quantifies over all $\Y$, so the simulator must work for any $\Y$ and in particular for honest $\Y$ (i.e. that does not conspire with the environment).

Observe that something similar can be said of the definition of strong deletion-compliance. A data collector that makes no effort to comply to deletion requests can be turned into one that is strongly deletion-compliant by keeping snapshots of its state after each protocol execution, rewinding to a previous state when a deletion request occurs and replaying the protocols that did not originate from $\Y$. This is similar to what~\cite{kannan_making_2011} propose for purging sensitive data from programs that were not designed with that goal in mind.

\paragraph{Deletion in Practice.}

Modern computer architectures are far from Turing machines. Deleting a file on modern operating systems simply removes any reference to that file's disk location from the file system. This is unsatisfactory from the point of view of deletion-compliance on many fronts. The file or fragments thereof can still be recovered until its location on disk is overwritten with new data. The common solution is to overwrite the disk location of the file with 0's or with random data upon deletion. However, there remains information in the disk on the \emph{size} of this file and on \emph{where} the data was stored, which can leak metadata on the nature and order of operations.  There are also technological challenges to applying such an approach to single files with the increasing popularity of SSD storage~\cite{wei_reliably_2011}. An alternative solution known as \emph{crypto-shredding} encrypts files with unique keys and overwrites the keys to delete the file. The resulting system would be computationally deletion-compliant, as opposed to statistically. However, we are left with a similar problem: the non-zero disk space of the data collector (i.e. the size of $state_\X$) scales with the number of protocols it ran. This leaks information on the deletion requester and thus is not (even computationally) deletion-compliant. It is an issue that affects both weak and strong deletion-compliance.

A potential remedy is to give the simulator a \emph{masked} version of the deletion-requester's view. The masked view would consist of the same elements as $view_\Y^\X$, but instead of having the contents of the incoming and outgoing tapes, it has only the length of this content in unary. A similar solution was already hinted at by~\citeauthor{GGV}, i.e. have a deletion-requester that sends the ``same kinds of messages to $\X$, but with different contents''. With such a masked view of $\Y$, the simulator could know the timing of operations done by $\X$ and recreate any \emph{gaps} in memory that resulted from deletion. We did not adopt this approach with our definition since doing so would make it incomparable to strong deletion-compliance.

\paragraph{Simulation and Randomness.}

How the data-collector uses randomness can have an influence on whether or not it satisfies Definition~\ref{def:weak del}. 
Consider for example a data collector that has the following protocols:
\begin{itemize}
\item $\mathtt{query}(1^\lambda)$: Read $\lambda$ bits of the randomness tape. Let $x$ be the value. Store $x$ in memory and return $y=f(x)$ where $f$ is a cryptographic one-way function.
\item $\mathtt{delete}(y)$: Search through memory to find $x$ such that $f(x)=y$. If found, delete $x$ from memory, otherwise do nothing. 
\end{itemize}
Obviously such a data collector meets the intuition of deletion-compliance, and indeed it does appear to satisfy the strong variant of deletion-compliance (see discussion below). However, in our simulation-based definition the simulator has no way, given the view of $\Z$ that consists of elements $y$ from the image of $f$, to efficiently recreate the state of $\X$ that consists of the preimages of the $y$'s without the random tape of $\X$ that was used to sample the $x$'s. Since the simulator must produce a state for $\X$ that is consistent with the view of $\Z$, the existence of an efficient simulator would contradict the one-wayness of $f$. 

There are two apparent solutions to the above problem. The first is to give $\S$ access to the random tape of $\X$ in addition to the view of $\Z$ to recreate the working tape of $\X$. The other is to \emph{not} give $\S$ access to the random tape of $\X$ and ask that the state of $\X$ can be efficiently computed only from the view of $\Z$. This limits the ways in which the data collector uses its ``private'' randomness beyond its initialization phase. We chose the second approach for the following reasons. First, it makes our definition closer, and thus more easily comparable, to strong deletion-compliance. Second, going with the first approach would give more power to the simulator, weakening further the definition. Third, most examples where this poses a problem like for the above data-collector can be remedied by employing \emph{public randomness}, e.g.\ the above data-collector would receive the string $x$ from the other machine in protocol $\mathtt{query}$.

We remark that a similar issue appears in~\cite{GGV}: in the real and ideal executions, different parts of the random tape of $\X$ may be used to run the corresponding protocols with $\Z$. For example, suppose that $\X$ is initialized with the same randomness tape in the real and ideal worlds. If in each execution of $\pi$, $\X$ reads $n$ bits of its randomness tape (i.e. moves the head $n$ positions to the right), then the position of the randomness tape head will differ for protocols $\pi$ initiated by $\Z$ in the real and ideal worlds. In the real world, $\Y$ initiated $\pi$ with $\X$ thus moving the head on the randomness tape and in the ideal world, $\Y_0$ does not initiate $\pi$ thus not moving the randomness tape head. This will result in different views for the environment and, in the above data collector, a different state for $\X$. Their solution\footnote{See Appendix A of their full version (\url{https://arxiv.org/abs/2002.10635}).} to this issue is to condition on the views being the same in the real and ideal case. If the values for the view of $\Z$ have the same domain in the real and ideal worlds, it is then sufficient to bound the distance between the states of $\X$ in the real and ideal cases that led to the same view for $\Z$. 

In general, this approach only works if $\X$ employs ``public'' randomness, i.e. if the views of $\Y$ and $\Z$ uniquely determine the random tape of $\X$ (even inefficiently so, as in the above data-collector). Otherwise, the real and ideal data-collector may use different private randomness leading to different states for $\X$, even if the view of $\Z$ is the same in both executions. For  example, considering the above data collector, if the function $f$ is not one-to-one (i.e.\ not a bijection), then conditioning on the real and ideal views of $\Z$ being the same no longer works since the real and ideal states for $\X$ may contain different $x,x'$ such that $f(x)=f(x')=y$ is in the view of $\Z$.
Along with this public randomness, the data-collector can use ``private'' randomness which is roughly defined as randomness that only affects the internal state of $\X$ and not the views of the machine it interacts with.

\subsection{The Relationship Between Strong and \DELs}
\label{sec:relat-betw-strong}

To substantiate our claim that the restrictions imposed by strong deletion-compliance (Definition~\ref{def:strong-delet-compl}) are too severe, we present a simple data-collector that demonstrates a separation between strong and \dels{}. This data-collector, albeit contrived, showcases the fact that leakage of a single bit of information on $\Y$ rules out the strong flavor of deletion-compliance, but not our weaker variant. The data-collector is described in Fig.~\ref{fig:leak-1bit}.

\begin{figure}[h]
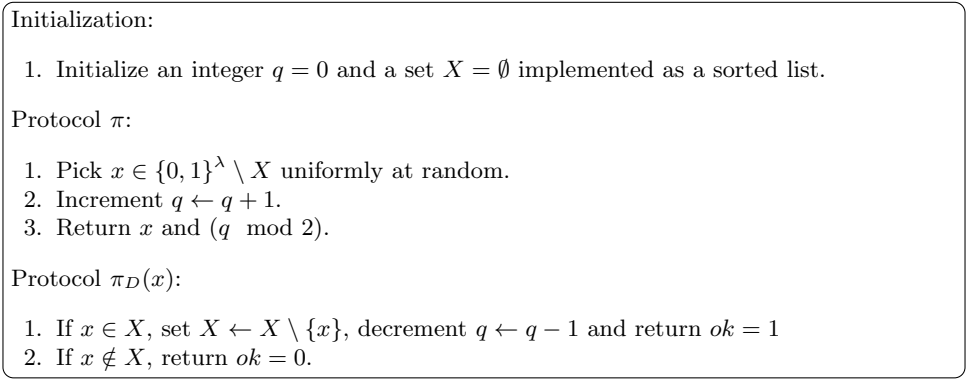
\centering

  \tikz{\node [draw,rounded corners] {
      \begin{minipage}{0.9\linewidth}
        Initialization:
        \begin{enumerate}
        \item Initialize an integer $q=0$ and a set $X=\emptyset$ implemented as a sorted list.
        \end{enumerate}

        Protocol $\pi$:
        \begin{enumerate}
        \item Pick $x\in\bool^\lambda\setminus X$ uniformly at random.
        \item Increment $q\leftarrow q+1$.
        \item Return $x$ and $(q\mod 2)$.
        \end{enumerate}

        Protocol $\pi_D(x)$:
        \begin{enumerate}
        \item If $x\in X$, set $X\leftarrow X\setminus \{x\}$,
          decrement $q\leftarrow q-1$ and return $ok= 1$
        \item If $x\notin X$, return $ok= 0$.
        \end{enumerate}

      \end{minipage}
    };}
\caption{The data-collector $\X$ of Theorem~\ref{thm:separation}.}
\label{fig:leak-1bit}
\end{figure}

\begin{theorem}\label{thm:separation}
  The data-collector $(\X,\pi,\pi_D)$ described in Fig.~\ref{fig:leak-1bit} is not strongly deletion-compliant, but it is \del{}.
\end{theorem}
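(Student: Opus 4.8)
The plan is to treat the two claims separately, starting with the failure of strong deletion-compliance, which is the easier direction. The idea is to exploit the parity bit $q \bmod 2$ returned by $\pi$, since it leaks exactly ``how many sessions have happened so far.'' I would take $\Y$ to run a single session of $\pi$ and $\Z$ to run a single session of $\pi$ that it schedules \emph{after} $\Y$'s (possible since the activation order is under the environment's control). In the real execution $\Y$'s session advances the counter to $q=1$, so $\Z$ observes parity $0$ from its own session; in the ideal execution $\Y_0$ is silent, so $\Z$ observes parity $1$. A distinguisher that simply outputs the parity bit recorded in $view_\Z$ then separates the real and ideal distributions with advantage $1$, contradicting Definition~\ref{def:strong-delet-compl}. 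The subsequent deletion of $\Y$'s element in the terminate phase does not help, since the offending bit is already fixed in $\Z$'s view during the alive phase.

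For the positive direction I would exhibit a simulator $\S$ and argue it is statistically good. Reading the description of $\X$ with the (implied) insertion of $x$ into $X$ during $\pi$, the work tape of $\X$ encodes only the pair $(q,X)$, with $X$ stored as a \emph{sorted} list and with the invariant $q=|X|$ maintained throughout. The simulator, given $view_\Z^\X$, would scan the transcript of $\Z$'s sessions in chronological order: for each session of $\pi$ it reads the returned value $x$ and inserts it into a working set $X_0$, and for each session of $\pi_D(x)$ that returned $ok=1$ it removes $x$ from $X_0$ (when present). It then outputs the canonical sorted-list encoding of $X_0$ together with $q=|X_0|$. The crucial point, already flagged in the paper's discussion of randomness, is that the $x$ values are \emph{public}: they appear verbatim in $view_\Z^\X$, so $\S$ can reproduce the exact strings stored by $\X$ rather than having to resample them.

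The heart of the argument is to show that, after the terminate phase, the real $state_\X$ is with overwhelming probability \emph{identical} to $\S(view_\Z^\X)$. First, every session of $\pi$ initiated by $\Y$ is matched by its $\pi_D$ in the terminate phase, so $\Y$'s elements are all removed and the final set $X$ consists only of elements originating from $\Z$'s sessions that were never deleted. Second, because $X$ is kept as a sorted list its memory representation is \emph{history-independent}: it depends only on the final set, not on the interleaving of insertions and deletions, so matching the set suffices to match the work tape (and $q=|X|$ is then determined). The chronological bookkeeping of $\S$ reproduces exactly this set, including corner cases such as an element that $\Z$ deletes and later re-obtains, or a $\pi_D(x)$ issued by $\Z$ on a value it never received, which leaves both the real set and $X_0$ unchanged.

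The one obstacle, and the only source of error, is the event that $\Y$ issues a successful $\pi_D(x)$ on an element $x$ belonging to $\Z$: such a deletion does not appear in $view_\Z^\X$, so $\S$ would wrongly retain $x$ in $X_0$. To rule this out I would use that the deletion token of a $\pi$ session is the value $x$ itself, that $\Z$'s values are (almost) uniform over $\bool^\lambda$, and that the only feedback $\Y$ ever receives about $\Z$'s sessions is the parity of the counter, which carries no information about these values. Consequently $\Y$'s polynomially many $\pi_D$ queries are independent of $\Z$'s elements, each collides with a fixed $\Z$-element with probability $\approx 2^{-\lambda}$, and a union bound over all sessions bounds the bad event by $\negl[\lambda]$. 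Conditioned on its complement the real and simulated states coincide exactly, so the statistical distance between $(state_\X,view_\Z^\X)$ and $(\S(view_\Z^\X),view_\Z^\X)$ is negligible, establishing that $(\X,\pi,\pi_D)$ is statistically \del{}. I expect this collision bound---and the clean separation of $\Y$'s from $\Z$'s elements that it licenses---to be the main technical point, the simulator construction and the history-independence observation being routine.
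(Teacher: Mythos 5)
Your proposal is correct and follows essentially the same route as the paper: the identical $\Y$/$\Z$ pair with the parity bit $q \bmod 2$ separating real and ideal views for the negative direction, and the same replay simulator (rebuild the sorted set and counter from $\Z$'s transcript) with a $\poly[\lambda]/2^\lambda$ collision bound for the positive direction. The only difference is cosmetic bookkeeping of the bad event---you condition on $\Y$ successfully deleting a $\Z$-element, while the paper conditions on $\Y$'s deletion tokens $\hat X$ intersecting the values $\{x_i\}$ in $\Z$'s view---and your version, relying as the paper implicitly does on $\Y$ learning nothing about $\Z$'s tokens beyond the parity bit, isolates the genuinely harmful case at least as cleanly.
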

\begin{proof}
  We first show that $\X$  is not strongly deletion-compliant. To do so, we construct an environment $\Z$ and a deletion-requester $\Y$ such that the view of $\Z$ in the real and ideal executions are perfectly distinguishable.

  The behavior of $\Y$ and $\Z$ in the alive phase is as follows:
  \begin{itemize}
  \item $\Y$: on first activation, initiate protocol $\pi$ with
    $\X$. On subsequent activations, do nothing and halt.
  \item $\Z$: activate $\Y$ once. After $\Y$ halts, initiate protocol $\pi$ with $\X$. When the instance of $\pi$ concludes (i.e. the ad-hoc client-side ITI halts), declare the end of the alive phase.
  \end{itemize}
  In the terminate phase, $\Y$ invokes $\pi_D$ with the deletion token it received from $\X$ during $\pi$ as is required from the execution model.

  Let  $state^R_\X$ and $view^R_\Z$ be the random variables describing the state of $\X$ and view of $\Z$ at the end of the above real execution (with $\Y$ acting as described). Let $(x, p)\in \bool^\lambda\times \bool$ be the message sent from $\X$ to $\Z$ in the instance of $\pi$. Then $state_\X^R= (X,q)=(\{x\}, 1)$ and $view_\Z^R= (x,p)= (x,0)$ since when $\Z$ initiates $\pi$, $\Y$ has already done its session of $\pi$ and has not yet done $\pi_D$, so $q=2$ during $\Z$'s instance of $\pi$, and $q=1$ after $\Y$ initiates $\pi_D$.
  Let $state^I_\X$ and $view^I_\Z$ be the same random variables in the ideal world where $\Y$ is replaced with $\Y_0$ that on every activation, does nothing and halts. Then $state^I_\X= (\{x\},1)$ is the same as in the real world, but $view^I_\Z=(x, 1)$ since $\Y_0$ does not initiate $\pi$ and $q=1$ in $\Z$'s session of $\pi$. The views of $\Z$ in the real and ideal executions can thus perfectly be distinguished.

  To show that $\X$ is \del{}, we must give a simulator that recreates the state of $\X$ from the view of $\Z$ (in the real execution). Let $\Y$ and $\Z$ be an arbitrary deletion-requester and environment where $\Z$ initiates $t=\poly[\lambda]$ sessions of $\pi$ or $\pi_D$ with $\X$. The view of $\Z$ contains of the list of protocols initiated with $\X$ and the contents of the incoming and outgoing tapes of the client-side ITIs for those protocols. Then we can parse the view of $\Z$ as $view_\Z= \{(\tau_i, x_i, p_i, ok_i)\}_{i\in [t]}$ where we set $p_i=\bot$ when $\tau_i=\pi_D$ and $ok_i=\bot$ when $\tau_i=\pi$. 

  We define the simulator $\S$ as follows. Initialize a set $S=\emptyset$ and integer $q=0$. For every  $(\tau_i, x_i, p_i, ok_i)\in view_\Z$, if $\tau_i=\pi$, insert $x_i$ in $S$ and increment $q$, and if $\tau_i=\pi_D$, if $x_i\in S$, remove $x_i$ from $S$ and decrement $q$, else do nothing. The output of $\S$ is $({\tt sort}(S),q)$. To prove indistinguishability, consider a deletion-requester $\Y$ that initiates $r=\poly[\lambda]$ sessions of $\pi$ (and an equal amount of $\pi_D$) with $\X$. 
  Let $\hat X$ be the $r$ deletion tokens that $\Y$ received from $\X$ in its session of $\pi$. We assume that $ x_i\notin \hat X$ for $i\in [t]$ and will add the probability of this event to the distinguishing advantage. Under this assumption, $state_\X$ is equal to $\S(view_\Z)$ since the set $X$ in the state of $\X$ contains exactly $S=\{x_i\}_{i\in[t]}$ (in sorted order) and $q=|S|$. Then the statistical distance between $state_\X$ and $\S(view_\Z)$ is the probability that $\hat X\cap\{x_i\}_{i\in[t]}\neq \emptyset$ which is at most $\frac{\poly[\lambda]}{2^\lambda}$ since $\Z$ is \ppt{} and can make at most $\poly[\lambda]$ queries to $\pi$ and $\pi_D$.
\qed  
\end{proof}

A desirable property of our definition of \dels{} is that it is implied by the stronger definition. There are some technical hurdles in proving that fact directly, however, mainly caused by the issue of randomness discussed earlier and the fact that strong deletion-compliance deals with two distinct real and ideal executions. The data collector does not start off with the same random tape in each executions, and even if it did, then it could result in observable differences in the views of the environment since randomness is consumed for the protocols initiated by $\Y$. We introduce an assumption that is used in this section only to prove that strong deletion-compliance implies \dels{} for the data collectors that satisfy this assumption.

The assumption is as follows. We distinguish between two forms of randomness used by $\X$. First, $\X$ is allowed to use \emph{private} randomness at initialization phase and during any protocol. Private randomness is defined such that changing the contents of the private random tape of $\X$ does not produce an observable change in the view of $\Z$ or $\Y$. The data collector $\X$ is also allowed a \emph{public} random tape with the restriction that the content of that tape can be efficiently reconstructed using $view_\Z$ and $view_\Y$.

\begin{definition}\label{def:public-private-tapes}
  We say that the machine $\X$ has \emph{private/public--separable} random tape if it can be divided into two random tapes $priv_\X$ and $pub_\X$ that satisfy the following.
  \begin{itemize}
  \item \emph{Private}: Let $M$ be an arbitrary ITI interacting with $\X$ and let $view_M(R)$ be the view of $M$ when interacting with $\X$ conditioned on $priv_\X=R$. Then for any values $R',R\in_R\bool^{\poly[\lambda]}$ for the private random tape $priv_\X$, $view_M(R)= view_M(R')$ with probability at least $1-\negl[\lambda]$.
  \item \emph{Public}: There exists an efficient function $f_\X(\cdot)$ such that $f_\X(view_\Z,view_\Y)$ returns the content of public random tape $pub_\X$.
  \end{itemize}
\end{definition}

Note that for every data collector given as example in (the full version of) \cite{GGV}, its random tape can be divided in such a way. Intuitively, we ask that the data collector $\X$ be deterministic up to some random initialization that only affects the internal state of $\X$, and that all further randomness can be efficiently computed from the views of machines interacting with $\X$. We show that under the above assumption, if $\X$ is strongly deletion-compliant, then it is \del{}.

\begin{theorem}\label{thm:ggv-implies-us}
  Let $(\X,\pi,\pi_D)$ be a strongly deletion-compliant data collector with public/private--separable random tape. Then $(\X,\pi,\pi_D)$ is \del{}. 
\end{theorem}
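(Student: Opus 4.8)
The plan is to exhibit a single simulator $\S$ that, on input $view_\Z^\X$, reconstructs and replays the \emph{ideal} execution (the one with the silent $\Y_0$) and outputs the resulting state of $\X$. Concretely, $\S$ parses $view_\Z^\X$ into the ordered transcript of the sessions that $\Z$ initiated with $\X$; it recovers the public random tape by evaluating the reconstruction function $f_\X$ of Definition~\ref{def:public-private-tapes} on $view_\Z^\X$ together with the (trivial) view of a silent deletion-requester; it samples a fresh private tape $priv_\X$; and it runs a fresh copy of $\X$ on these two tapes, feeding it exactly the messages recorded in the transcript while no deletion-requester is active. The output is the work tape of this fresh copy of $\X$ at the end. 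Since $f_\X$ is efficient and $\X$ is \ppt, $\S$ is \ppt.

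The heart of the argument is the claim that, \emph{in the ideal execution}, this simulator reproduces the joint distribution of the state and the view: $(\S(view_\Z^{I}),\, view_\Z^{I}) \approx (state_\X^{I},\, view_\Z^{I})$. I would prove this by arguing that $\S(v)$ samples from the conditional distribution of $state_\X^{I}$ given $view_\Z^{I}=v$. Three observations drive this. First, with $\Y_0$ silent, the only messages $\X$ receives in the ideal execution are those from $\Z$, all of which appear in $v$; hence, once both random tapes of $\X$ are fixed, $state_\X^I$ is a deterministic function of $v$. Second, by the \emph{public} part of Definition~\ref{def:public-private-tapes}, the consumed public tape is determined by the views, and since $\Y_0$ contributes an empty view, $f_\X$ recovers it from $v$ alone. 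Third, by the \emph{private} part, changing $priv_\X$ does not alter $view_\Z$ except with negligible probability, so conditioning on $view_\Z^I=v$ leaves $priv_\X$ statistically close to uniform and independent of $v$; thus resampling it matches the conditional law. Summing the negligible errors, $\S$ replays the ideal execution faithfully and the claim follows up to a negligible statistical gap.

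It then remains to transport this relation from the ideal world to the real world, which is exactly where strong deletion-compliance enters. Definition~\ref{def:strong-delet-compl} gives $(state_\X^{R}, view_\Z^{R}) \approx (state_\X^{I}, view_\Z^{I})$; marginalizing out the first coordinate yields $view_\Z^{R} \approx view_\Z^{I}$, and applying the \ppt{} map $\S$ coordinate-wise preserves indistinguishability, so $(\S(view_\Z^{R}), view_\Z^{R}) \approx (\S(view_\Z^{I}), view_\Z^{I})$. Chaining the three relations by the triangle inequality for $\Delta$,
\[
(state_\X^{R}, view_\Z^{R}) \approx (state_\X^{I}, view_\Z^{I}) \approx (\S(view_\Z^{I}), view_\Z^{I}) \approx (\S(view_\Z^{R}), view_\Z^{R}),
\]
and since $state_\X^R = state_\X$ and $view_\Z^R$ restricted to $\X$ equals $view_\Z^\X$ are precisely the quantities appearing in Definition~\ref{def:weak del}, this establishes that $\S$ witnesses \dels{}. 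The statistical versus computational distinction is preserved throughout: strong statistical (resp.\ computational) compliance feeds the corresponding flavour of indistinguishability into each step.

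The step I expect to be the main obstacle is the ideal-world claim of the second paragraph. The delicate points are (i) justifying that $f_\X$, designed for consistent pairs of views, recovers the correct public tape when fed $v$ alongside the empty $\Y_0$-view, and that the portion of public randomness consumed by $\X$ during $\Z$'s sessions is fully captured by $view_\Z^\X$; (ii) handling the conditioning on $priv_\X$ carefully, since the private property is only guaranteed with overwhelming probability, so the event that resampling $priv_\X$ changes a recorded response must be bounded and absorbed into the error; and (iii) ensuring that the replay is message-by-message consistent with $v$, so that the simulated transcript does not diverge from the recorded one. A secondary subtlety is the mismatch between the full view $view_\Z$ used in Definition~\ref{def:strong-delet-compl} and the restricted view $view_\Z^\X$ used in Definition~\ref{def:weak del}; since $\X$'s ideal state depends only on the $\X$-restricted view, this is harmless, but it should be stated explicitly.
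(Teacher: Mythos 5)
Your proof takes essentially the same route as the paper's: the identical replay simulator (fresh private tape, public tape recovered via $f_\X$ paired with the empty view of a silent deletion-requester, then message-by-message replay of the transcript in $view_\Z^\X$), the same anchoring step that the simulator reproduces the ideal-world state, and the same transport to the real world by applying the \ppt{} map $\S$ to the indistinguishable views and chaining with the triangle inequality. If anything, your conditional-resampling treatment of the private tape is more careful than the paper's, which simply asserts the exact equality $\S(view_{\Z}^{I})= state_{\X}^{I}$ where your argument correctly allows a negligible statistical gap.
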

\begin{proof}
 To compare the definition of \dels{} with that of strong deletion-compliance, we need to clarify the execution context. In~\cite{GGV}, there are two parallel executions: the real execution where $\X$ interacts with $\Y$ and $\Z$ through protocols $\pi$ and $\pi_D$, and the ideal execution where $\Y$ is silent (i.e. it does not initiate any protocol). Moreover, in the formalism of~\cite{GGV}, $\Y$ is not allowed to send messages to $\Z$. For our definition, we are only interested in the real execution, where we introduce a simulator that reconstructs the state of $\X$ from only the view of $\Z$. We  will use the real/ideal indistinguishability to show that this simulator works. 
  
  We consider the following simulator $\joli S$. On input $view_\Z$,
  it runs a new instance of $\X$ with a freshly initialized private
  random tape $priv_\X$ and, given the view of the environment
  $view_{\joli Z}$ it computes a public random tape $pub_\X$ of $\X$
  consistent with that view (where the view of $\Y$ is assumed to be
  empty).  It then simulates the execution of every protocol $\pi$ and
  $\pi_D$ in the same order they appear in $view_\Z$ by using the
  values of the outgoing messages of $\Z$ as incoming messages for $\X$ for every such
  protocol and using the random tapes $priv_\X$ and $pub_\X$ as the
  sources of randomness. After simulating every protocol in $view_\Z$, it outputs the state of its instance of $\X$.

 Intuitively, the simulator can always simulate $\X$ in this way because the state of $\X$ is a deterministic function of its random tapes and of the views of $\Y$ and $\Z$. Since we may assume that $\Y$ is silent, by virtue of the indistinguishability of the real and ideal executions, the state of $\X$ is function only of $priv_\X$, $pub_\X$ and $view_\Z^I$ to which $\S$ has access to. By Definition~\ref{def:public-private-tapes}, since $priv_\X$ has no impact on the view of $\Z$ and since $pub_\X$ is computed from this view, the above simulation will also produce responses from $\X$ that are consistent with the view of $\Z$.
  By the above observations, 
  this simulation exactly replicates the state of $\X$ in the ideal
  execution where $\Y$ really is silent: $\joli S(view_{\joli Z}^I)= state_{\joli X}^I$.

  The rest of the proof is the same whether we consider statistical or
  computational indistinguishability, so we use the notation
  ``$\approx$'' to represent either.  We now compare
  $\S(view_{\joli Z}^I)$ and $\S(view_{\joli Z}^R)$. By assumption, we
  have that
    \begin{equation*}
      (state_\X^R,view_\Z^R)\approx (state^I_\X, view^I_\Z)\enspace.
    \end{equation*}
  Seeking a contradiction, we suppose that
  $\joli S(view_{\joli Z}^I) \not\approx \joli S(view_{\joli Z}^R)$.
  That means there exists a distinguisher $\D$ such that 
  $$
    \Big| \Pr\big[\D(\joli S(view_{\joli Z}^{I}))=1\big] -
      \Pr\big[\D(\joli S(view_{\joli Z}^{R}))=1\big] \Big| 
      > \frac 1{\poly[\lambda]} \enspace.
  $$
  This allows us to define a distinguisher $\D'$ by
  $\D'(view_{\joli Z}) = \D\big(\joli S(view_{\joli Z})\big)$
  to distinguish the view of $\Z$ in the real and ideal executions, which contradicts the fact that 
  $view_{\joli Z}^I \approx view_{\joli Z}^R$. Therefore,
  we must have that
  $\joli S(view_{\joli Z}^I) \approx \joli S(view_{\joli Z}^R)$.
    Finally, since $state_{\joli X}^I\approx state_{\joli X}^R$,
  we have that 
  $$
  \joli S(view_{\joli Z}^R) \approx \joli S(view_{\joli Z}^I) =
  state_{\joli X}^I \approx state_{\joli X}^R\enspace .
  $$
   The time complexity of $\S$ is at most that of $\X$ in the real execution and of the function $f_\X$ that maps the view of $\Z$ to the public random tape of $\X$ which is assumed to be poly-time computable, and thus $\S$ is  a \ppt{} machine.
   \qed\end{proof}

Our definition of \dels{} also satisfies some form of converse of the above Theorem. We show that if a data-collector is \del{} and also protects $\Y$'s privacy in a precise manner, then it is strongly deletion-compliant. That is, we show that if we define privacy as what is implied by strong deletion-compliance -- that the view of the environment is indistinguishable in the real and ideal execution where the deletion-requester is silent -- then a private and weakly deletion-compliant data collector is strongly deletion-compliant.
\begin{definition}
  We say that a data-collector $(\X,\pi,\pi_D)$ is (computationally) $\ep$--\emph{privacy-preserving} if the view of $\Z$ in the real execution $view_\Z^R$ and the view of $\Z$ in an ideal execution $view_\Z^I$ where $\Y$ is replaced with a \emph{silent} $\Y_0$ satisfy the following. For all (\ppt{}) distinguisher $\D$,
  \begin{equation*}
    \left| \Pr[\D(view^R_\Z)=1] - \Pr[\D(view^I_\Z)=1] \right| \leq \ep\enspace .
  \end{equation*}
\end{definition}

\begin{theorem}\label{thm:priv+weak-impl-strong}
  If $(\X,\pi,\pi_D)$ is both $\ep_1$--\del{} and $\ep_2$--privacy-preserving, then it is $2(\ep_1+\ep_2)$--strongly deletion-compliant.
\end{theorem}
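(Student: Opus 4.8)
The plan is to build a short hybrid chain from $(state_\X^R, view_\Z^R)$ to $(state_\X^I, view_\Z^I)$ that passes through the simulator's output on each of the two views, charging two of the three steps to \dels{} and the middle step to privacy. Fix any (\ppt{}) environment $\Z$ and deletion-requester $\Y$, and let $\S$ be the simulator guaranteed by the hypothesis that $\X$ is $\ep_1$--\del{}. Two observations make the argument go through. First, the \emph{same} $\S$ is promised (by the order of quantifiers ``there exists $\S$ such that for all $\Y,\Z$'' in Definition~\ref{def:weak del}) to work for every deletion-requester simultaneously, in particular for both $\Y$ and the silent $\Y_0$. Second, I would use the convention inherited from the strong model (Definition~\ref{def:strong-delet-compl}) that $\Y$ never writes to $\Z$'s tapes, so in both the real and the ideal execution $\Z$ interacts only with $\X$ and hence $view_\Z = view_\Z^\X$; this lets me feed the strong-model view of $\Z$ directly into the weak-model statement. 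Throughout I write ``$\approx$'' for either statistical or computational indistinguishability, as in the proof of Theorem~\ref{thm:ggv-implies-us}.

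First I would apply $\ep_1$--\dels{} to the real execution (with deletion-requester $\Y$), obtaining
\[
(state_\X^R, view_\Z^R) \;\approx_{\ep_1}\; \big(\S(view_\Z^R), view_\Z^R\big),
\]
and then apply it a second time to the \emph{ideal} execution, viewing the silent $\Y_0$ as a (trivial) deletion-requester, obtaining
\[
(state_\X^I, view_\Z^I) \;\approx_{\ep_1}\; \big(\S(view_\Z^I), view_\Z^I\big).
\]
The middle step links the two simulator-generated hybrids using privacy: since $view_\Z^R \approx_{\ep_2} view_\Z^I$ by $\ep_2$--privacy-preservation and the map $v \mapsto (\S(v), v)$ is a (possibly randomized) post-processing, a data-processing/reduction argument yields
\[
\big(\S(view_\Z^R), view_\Z^R\big) \;\approx_{\ep_2}\; \big(\S(view_\Z^I), view_\Z^I\big).
\]
Concretely, any distinguisher $\D$ for the two paired hybrids is converted into $\D'(v) = \D(\S(v), v)$ against $view_\Z^R$ versus $view_\Z^I$, whose advantage is at most $\ep_2$. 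Chaining the three bounds with the triangle inequality for $\Delta$ gives strong deletion-compliance with error $\ep_1 + \ep_2 + \ep_1 = 2\ep_1 + \ep_2 \le 2(\ep_1 + \ep_2)$, as claimed.

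The one point requiring genuine care — the ``hard part,'' though it is more bookkeeping than difficulty — is the middle step's reduction, which must respect the distinguisher class of the \emph{target} notion: in the computational case I need both $\S$ and $\D$ to be \ppt{} so that $\D'$ is \ppt{} and the privacy hypothesis (quantified over \ppt{} distinguishers) genuinely applies, while in the statistical case the distinguishers are unbounded and computing $\S$ inside $\D'$ is unproblematic. The other subtlety is the legitimacy of invoking the weak guarantee on the ideal execution; this is sound precisely because \dels{} supplies a single simulator valid for all deletion-requesters, so no new object must be constructed for $\Y_0$. Everything remaining is routine triangle-inequality accounting.
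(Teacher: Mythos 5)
Your proof is correct and follows the same three-step hybrid as the paper's proof: real pair $\to$ simulator-on-real-view $\to$ simulator-on-ideal-view $\to$ ideal pair, charging the first leg to \dels{}, the middle leg to privacy via the post-processing $v\mapsto(\S(v),v)$ (with the same reduction $\D'(v)=\D(\S(v),v)$), and finishing with the triangle inequality. Where you genuinely diverge is the last leg. The paper bounds $\Delta\big((\S(view^I_\Z),view^I_\Z),(state^I_\X,view^I_\Z)\big)$ indirectly, arguing that $\S(view^I_\Z)$ must be (indistinguishable from) a valid state arising from an ideal execution --- since otherwise one could distinguish either $state^R_\X$ from $\S(view^R_\Z)$ or $\S(view^R_\Z)$ from $\S(view^I_\Z)$ --- and charges $\ep_1+\ep_2$ to this step, yielding the stated total of $2(\ep_1+\ep_2)$. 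You instead observe that the silent $\Y_0$ is itself a legitimate \ppt{} deletion-requester and that the quantifier order in Definition~\ref{def:weak del} supplies a \emph{single} simulator valid for all $\Y$, so the weak guarantee applies verbatim to the ideal execution and bounds this leg by $\ep_1$ alone. Your route is both cleaner (the paper's ``valid state'' argument never quite pins $\S(view^I_\Z)$ to the specific random variable $state^I_\X$, whereas invoking the definition on the $\Y_0$-execution does exactly that) and quantitatively slightly tighter, giving $2\ep_1+\ep_2\le 2(\ep_1+\ep_2)$. Your two explicit caveats --- that $\S$ must be \ppt{} for the middle reduction to respect the computational privacy hypothesis, and that in the strong model $\Y$ does not write to $\Z$ so $view_\Z$ and $view_\Z^\X$ coincide --- are both points the paper leaves implicit, and flagging them strengthens rather than changes the argument.
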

\begin{proof}
  By assumption of $\ep_1$--\dels{}, there exists $\S$ such that for any $\D$,
  \begin{equation*}
     \left| \Pr[\D(state^R_\X, view^R_\Z)=1] - \Pr[\D(\S(view^R_\Z),view^R_\Z)=1] \right| \leq \ep_1\enspace. 
   \end{equation*}
   Furthermore, by the assumption of $\ep_2$--privacy-preserving, $view_\Z^R\approx_{\ep_2} view_\Z^I$. It directly follows that $\S(view_\Z^R)\approx_{\ep_2}\S(view_\Z^I)$. We now argue that $\S(view_\Z^I)$ outputs a valid state for $\X$ (or something indistinguishable from a valid state). Indeed, if it wasn't the case, then one could either distinguish between $state_\X^R$ and $\S(view_\Z^R)$ or between $\S(view_\Z^R)$ and $\S(view_\Z^I)$ which is not possible except with probability at most $\ep_1+\ep_2$. The state of $\X$ produced as output of $\S(view^I_\Z)$ must be the product of an ideal execution, since $view^I_\Z$ is the view of $\Z$ in an ideal execution where $\Y$ is replaced with the silent $\Y_0$. If we let $state_\X^I$ denote the true state of $\X$ in the ideal execution, we have that
   \begin{align*}
     \Delta((state^R_\X, view^R_\Z),(state^I_\X,view^I_\Z)) 
     &\leq\Delta((state^R_\X, view^R_\Z), (\S(view^R_\Z), view^R_\Z))\\
     & \quad + \Delta((\S(view^R_\Z), view^R_\Z),(state^I_\X,view^I_\Z)) \\
     &\leq\Delta((state^R_\X, view^R_\Z),(\S(view^R_\Z), view^R_\Z)) \\
     & \quad + \Delta((\S(view^R_\Z), view^R_\Z),(\S(view^I_\Z), view^I_\Z)) \\
     & \quad+ \Delta((\S(view^I_\Z), view^I_\Z),(state^I_\X,view^I_\Z)) \\
     &\leq \ep_1+ \ep_2 + (\ep_1+\ep_2)
   \end{align*}
\qed\end{proof}

\section{Example: a Deletion-Compliant Message Board}
\label{sec:exampl-delet-compl}

We present a construction of data-collector that implements a public message board where users may post messages and fetch new messages. The data-collector presented in Fig.~\ref{fig:msg-board} is obviously not strongly deletion-compliant since the adversary can see the deletion-requester's messages. The data collector allows machines to post messages to the message board through a protocol $\pi_{\tt post}$. To post a message $m\in\bool^*$, the client machine sends it to $\X$ along with a \emph{key} $k$ used for deletion\footnote{Note that this key is not strictly necessary for deletion-compliance. $\X$ would still be deletion-compliant if anyone could delete every message.}. This key is used for uniquely identifying message and for authentication when deleting from the message board and is stored alongside the message in $\tt list$. Letting the client maching choose $k$ instead of $\X$ lets us use the results of Section~\ref{sec:hist-indep-data} to show \dels{} of $\X$.  To fetch the list of posted messages, a machine invokes $\pi_{\tt fetch}$, $\X$ then sends the list of messages contained in the list (without the associated keys) in order of insertion. To delete using $\pi_D$, $\X$ searches the list for the key $k$ and removes from the list the entry that contains this key.

\begin{figure}[h]
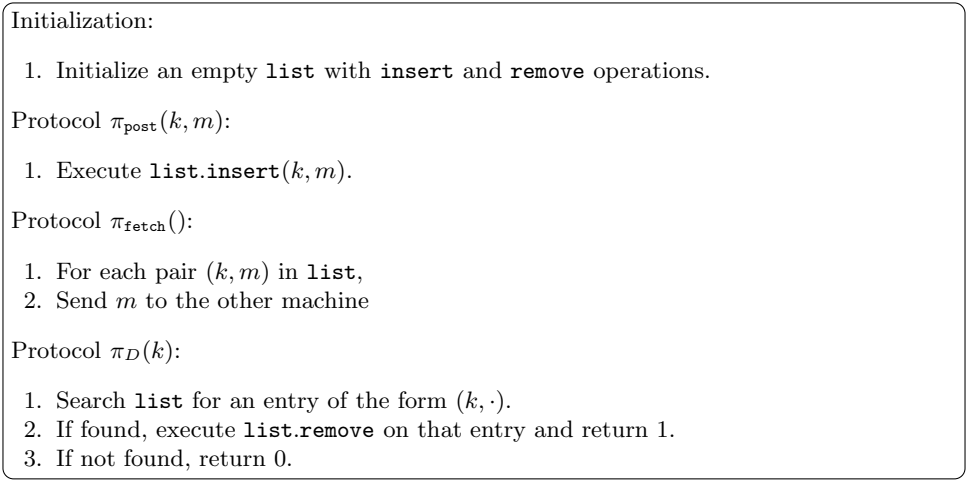
\centering

  \tikz{\node [draw,rounded corners] {
      \begin{minipage}{0.9\linewidth}
        Initialization:
        \begin{enumerate}
        \item Initialize an empty $\tt list$ with $\tt insert$ and $\tt remove$ operations.
        \end{enumerate}

        Protocol $\pi_{\tt post}(k,m)$:
        \begin{enumerate}
        \item Execute ${\tt list.insert}(k,m)$.
        \end{enumerate}

        Protocol $\pi_{\tt fetch}()$:
        \begin{enumerate}
        \item For each pair $(k,m)$ in $\tt list$, 
        \item Send $m$ to the other machine
        \end{enumerate}
        
        Protocol $\pi_D(k)$:
        \begin{enumerate}
        \item Search $\tt list$ for an entry of the form $(k,\cdot)$.
        \item If found, execute $\tt list.remove$ on that entry and return 1.
        \item If not found, return 0.
        \end{enumerate}

      \end{minipage}
    };}
\caption{The ``message board''data-collector $\X$.}
\label{fig:msg-board}
\end{figure}

A seemingly trivial, yet unfruitful approach to constructing the simulator $\S$ of Definition~\ref{def:weak del} is to argue that $\Z$ receives complete information on the state of $\X$ when protocol $\pi_{\tt fetch}$ is executed. However, the simulator must depend only on $\X$ and work for any environment, and in particular for $\Z$ that never invokes $\pi_{\tt fetch}$. 

\begin{theorem}\label{thm:example-collector}
  If $\tt list$ is a history-independent implementation of a list data structure, then the data-collector of Fig.~\ref{fig:msg-board} is \del{}.
\end{theorem}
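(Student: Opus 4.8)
The plan is to exhibit a single simulator $\S$ that, given only $view_{\Z}^{\X}$, rebuilds the memory representation of $\X$'s list by \emph{replaying} on a fresh instance of the same implementation exactly the operations that $\Z$ performed against $\X$. Concretely, $\S$ parses $view_{\Z}^{\X}$ as the time-ordered transcript of $\Z$'s sessions, initializes an empty copy of the history-independent $\tt list$, and processes the sessions in order: on a $\pi_{\tt post}(k,m)$ it runs ${\tt list.insert}(k,m)$; on a $\pi_D(k)$ it runs ${\tt list.remove}$ on the entry keyed by $k$ whenever the recorded answer was $1$ (and does nothing when it was $0$); and on a $\pi_{\tt fetch}$ it does nothing, since fetching is read-only. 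The output of $\S$ is the resulting memory representation of $\tt list$. Note that $\X$ itself holds no randomness and no auxiliary counter in this construction, so $state_{\X}$ is exactly this representation.

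The correctness argument rests on two facts. First, by the execution model every $\pi_{\tt post}$ session initiated by $\Y$ is matched in the terminate phase by the corresponding $\pi_D$ on the same deletion token (key), so --- provided no key is shared, see below --- all of $\Y$'s entries are removed and the \emph{logical} content of $\X$'s list after the terminate phase is precisely the sequence of $(k,m)$ pairs posted by $\Z$ and not subsequently deleted by $\Z$. Since removals preserve the relative order of the surviving elements, this is the same ordered sequence that $\S$ reconstructs. Second, the two lists were produced by radically different histories --- $\X$'s by an arbitrary interleaving of $\Y$'s and $\Z$'s inserts and deletes, $\S$'s by $\Z$'s operations alone --- and this is exactly where \emph{history-independence} is invoked: two operation sequences reaching the same ADS state yield the same memory representation. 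Hence, on the event that no key is shared, $\S(view_{\Z}^{\X}) = state_{\X}$ \emph{identically}, not merely indistinguishably.

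The step I expect to be the main obstacle is ruling out key collisions between $\Y$ and $\Z$, because $\pi_D(k)$ is ambiguous when two entries carry the same key: if $\Y$ and $\Z$ both post under a common $k$, a deletion may remove the ``wrong'' entry so that $\X$'s surviving message differs from the one $\S$ reconstructs, and --- crucially --- a distinguisher holding $view_{\Z}^{\X}$ can detect this mismatch. The resolution, matching the honest-usage regime and the bookkeeping already used in the proof of Theorem~\ref{thm:separation}, is that keys are drawn uniformly from a large space $\bool^{\lambda}$, so that across the $\poly[\lambda]$ sessions run by $\Y$ and $\Z$ the probability that any two keys coincide is at most $\poly[\lambda]^2/2^{\lambda} = \negl[\lambda]$; I would fold this event into the distinguishing advantage. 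Conditioned on its complement the two states are identical, so a triangle-inequality argument gives $\Delta\big((state_{\X},view_{\Z}^{\X}),(\S(view_{\Z}^{\X}),view_{\Z}^{\X})\big) \le \negl[\lambda]$, establishing statistical \dels{}. Finally I would observe that $\S$ merely replays a subset of the recorded operations plus one list build, so it runs in time polynomial in that of $\X$, and the simulation is efficient; this instance can also be read as a special case of the general history-independence result of Section~\ref{sec:hist-indep-data}.
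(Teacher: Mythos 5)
Your simulator and your use of history independence are exactly the paper's: the paper sketches this identical replay simulator in Section~\ref{sec:exampl-delet-compl}, and its actual proof consists of checking the four hypotheses of the general result, Theorem~\ref{thm:hist-indep}, of which your self-contained argument is an inlined special case (as you note yourself at the end). The divergence, and the genuine gap, is in your treatment of key collisions.

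You dispose of collisions by asserting that keys are drawn uniformly from $\bool^\lambda$ and importing the $\poly[\lambda]^2/2^\lambda$ bookkeeping from Theorem~\ref{thm:separation}. That bound is legitimate there only because the tokens $x$ are sampled by $\X$ itself; in Fig.~\ref{fig:msg-board} the key $k$ is an \emph{input} chosen by the client machine (the paper stresses this design choice explicitly), and Definition~\ref{def:weak del} quantifies over all \ppt{} $\Y$ and $\Z$. Nothing prevents $\Y$ and $\Z$ from both hardcoding $k=0^\lambda$ --- no communication is even needed, and $\Y$--$\Z$ communication is in any case permitted in the weak model --- so the event you condition away has probability $1$ for such adversaries and your triangle-inequality step collapses. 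Worse, no probabilistic patch can close this: under a removal semantics in which $\pi_D(k)$ deletes the wrong duplicate, a silent $\Y$ and a $\Y$ that posts under $\Z$'s key and then deletes induce identical $view_\Z^\X$ but different final states, and a single simulator cannot be close to both. The repair must be semantic rather than probabilistic, and it is precisely the paper's (tacit) hypothesis: item~4 of Theorem~\ref{thm:hist-indep} requires each session of $\pi_D$ to realize a deletion operation in the sense of Definition~\ref{def:del op}, i.e.\ removal must cancel the \emph{matching} insert under every interleaving --- for this data-collector, this amounts to assuming the deletion token unambiguously identifies the inserted entry (or to excluding cross-party duplicate keys as a hypothesis), not to arguing that collisions are unlikely. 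With that assumption stated, the rest of your proof goes through verbatim and the error is in fact $0$, not merely $\negl[\lambda]$.
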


We can use the techniques of Section~\ref{sec:hist-indep-data} to prove that this data collector satisfies \dels{} (Definition~\ref{def:weak del}), so we defer the proof to that section. Nevertheless, it is instructive to explicitly describe a simulator for the data-collector $\X$ and sketch why it satisfies Definition~\ref{def:weak del}.
We construct $\S$ as follows:
\begin{enumerate}
\item Initalize an empty $\tt list$.
\item For each protocol $\tau$ in $view_\Z$, 
  \begin{itemize}
  \item if $\tau=\pi_{\tt fetch}$, do nothing.
  \item if $\tau=\pi_{\tt post}$, let $(k,m)$ be the message sent by $\Z$ to $\X$; append $(k,m)$ to $\tt list$.
  \item if $\tau=\pi_{D}$, let $k$ be the key sent by $\Z$; if $\tt list$ has an entry of the form $(k,\cdot)$, execute ${\tt list.remove}$ to remove that entry.
  \end{itemize}
\item The output of $\S$ is $\tt list$.
\end{enumerate}
The above simulator works because if $\tt list$ is instantiated with a history-independent implementation, then there is no distinction in the state of $\X$ between the case where each invocation of $\pi_{\tt post}(k,m)$ is \emph{ultimately} followed by the corresponding $\pi_D(k)$ and the case where $\pi_{\tt post}(k,m)$ is not invoked at all.

\section{Properties of \DELs{}}
\label{sec:concrete-systems}

Our weaker definition of deletion-compliance has several appealing properties. In this section, we show the many ways in which \dels{} is composable.

\subsection{Composability of Deletion-Requesters and Data-Collectors}
\label{sec:prop-compo}

We first show that the \dels{} error scales linearly with the number of instances of $\pi$ initiated by $\Y$. As in~\cite{GGV}, we call an execution \emph{$k$--representative} if the deletion-requester $\Y$ initiates $k$ instances of $\pi$ with $\X$.
 
\begin{definition}
  A deletion-requester is \emph{$k$-representative}
  for $k\in\N$ if, when interacting with a data-collector with protocols
  $(\pi,\pi_D)$, it initiates at most $k$ instances of $\pi$.
\end{definition}

\begin{theorem}\label{thm:compo-delreq}
 Let $(\X,\pi,\pi_D)$ be a data-collector with
  $1$-representative \dels{} error $\ep_1$. Then for all
  $k\in \N$, the $k$-representative \dels{} error $\ep_k$ of $\X$
  is at most $2k\cdot \ep_1$.
\end{theorem}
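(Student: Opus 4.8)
The plan is to prove the bound by a hybrid argument over the $k$ sessions that $\Y$ initiates, amplifying the single-session guarantee one session at a time. Fix a $k$-representative deletion-requester $\Y$ with sessions $\sigma_1,\dots,\sigma_k$ and an arbitrary environment $\Z$, and let $\S$ be the simulator witnessing the $1$-representative error $\ep_1$; I would use this same $\S$ as the $k$-representative simulator and bound its error. First I would record the structural fact that drives everything: because $\X$ is oblivious to whether it is talking to $\Y$ or to $\Z$, and because every session $\Y$ runs is eventually followed by its deletion in the terminate phase, one can \emph{delegate} any subset of $\Y$'s sessions to the environment---having the environment run both $\pi$ and the matching $\pi_D$---without changing the distribution of $state_\X$ nor of the part of $view_\Z^\X$ coming from $\Z$'s own interactions. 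The only effect of such a delegation is that the transcripts of the delegated sessions now appear in the environment's view rather than being hidden inside $\Y$.

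With this in hand, I would define, for $i=0,\dots,k$, the hybrid world $W_i$ in which sessions $\sigma_1,\dots,\sigma_i$ are delegated to an environment $\Z_i$ while $\sigma_{i+1},\dots,\sigma_k$ remain with a $(k-i)$-representative $\Y_i$. In $W_0$ the distinguisher sees the real pair $(state_\X,view_\Z^\X)$, and in $W_k$ every session of $\Y$ has been exposed to the environment, so $\Y_k$ is silent and the $1$-representative guarantee (a silent requester being a special case of a $1$-representative one) gives that $\S$ reconstructs $state_\X$ from the full view up to error $\ep_1$. To move between two adjacent worlds $W_i$ and $W_{i+1}$, which differ only in whether $\sigma_{i+1}$ is run by $\Y$ or delegated to $\Z$, I would invoke the $1$-representative guarantee for the requester that runs $\sigma_{i+1}$ \emph{alone} against the environment that already absorbs $\Z$ together with all the remaining sessions of $\Y$, so that exactly one session is hidden. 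Bridging the two worlds through the common true-state/simulator pair furnished by that guarantee costs two applications, i.e.\ $2\ep_1$ per transition. Summing over the $k$ transitions yields the claimed bound $\ep_k \le 2k\,\ep_1$, and since $\ep_k$ is defined as the smallest achievable error this suffices.

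The step I expect to be the main obstacle is exactly the reconciliation of the two roles the sessions must play. To match the target quantity---whose environment view $view_\Z^\X$ hides all $k$ of $\Y$'s sessions---I want the sessions to remain with $\Y$, but to legitimately apply the $1$-representative guarantee at each transition I need \emph{all but one} of them exposed to the environment, because that guarantee tolerates only a single deletion session; naively hiding several at once would amount to silently invoking the very $(k-i)$-representative guarantee one is trying to establish. Making the hybrid go through therefore rests entirely on arguing that delegating sessions (with their deletions) to $\Z$ is distribution-preserving for the pair $(state_\X,view_\Z^\X)$, which is where care is most needed. This is also precisely where the randomness caveats discussed earlier resurface: the delegation argument is clean only when relabelling a session as an environment session does not shift $\X$'s randomness-tape head or otherwise perturb its later responses to $\Z$---that is, under a public/private separation in the spirit of Definition~\ref{def:public-private-tapes}. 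Controlling private randomness that leaves session-dependent residue, and confirming that $\Y$'s communication to $\Z$ does not break the invariance of $\Z$'s view, are the delicate points I would treat most carefully.
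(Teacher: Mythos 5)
Your central structural move---regrouping $\Y$'s sessions into the environment so that only one session remains hidden---is exactly the engine of the paper's proof, and your per-transition cost of $2\ep_1$ matches the recursion $\ep_k\le 2\ep_1+\ep_{k-1}$ obtained there. But your flat hybrid does not close, for precisely the reason you flag as ``the main obstacle'' and then leave unresolved. The quantity to be bounded pairs the simulator with the \emph{small} view $view_\Z^\X$ hiding all $k$ sessions, whereas every legitimate invocation of the $1$-representative guarantee yields a simulator whose input is an \emph{enlarged} view containing the delegated transcripts. Bridging $W_i$ and $W_{i+1}$ ``through the common true-state/simulator pair'' needs, on the side where $i{+}1$ sessions are hidden, an $(i{+}1)$-representative guarantee---the very statement under proof; and your device of isolating $\sigma_{i+1}$ alone as the requester gives only $state_\X\approx_{\ep_1}\S'(\text{view hiding }\sigma_{i+1})$, which is neither endpoint of your transition. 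Likewise ``use this same $\S$ as the $k$-representative simulator'' does not typecheck: at $W_k$ the guarantee concerns $\S$ applied to the full view and says nothing about $\S(view_\Z^\X)$. The paper escapes the circularity by making the recursion explicit, as an induction on $k$: one application of the $1$-representative guarantee trades $state_\X$ for $\S'(view_{\Z'})$, where $\Z'$ absorbs $\Y_1,\dots,\Y_{k-1}$; a second application argues that the output of $\S'$ is within $\ep_1$ of the state of a \emph{valid $(k-1)$-representative execution} (since $view_{\Z'}$ contains the full views of $\Y_1,\dots,\Y_{k-1}$); and the inductive hypothesis then supplies a simulator on the small view $view_\Z$ with error $\ep_{k-1}$, giving $\ep_k\le 2\ep_1+\ep_{k-1}\le 2k\ep_1$. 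The final simulator is the inductively obtained one, not the $1$-representative $\S$.

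Your ``delegation lemma'' is also misdiagnosed, in a way that would weaken the theorem. In the paper's formalism the regrouping is pure relabeling: the same ITIs exchange the same messages under the same schedule (the new environment simply runs the matching $\pi_D$ voluntarily, which environments may do, and the weak model already permits $\Y$-to-$\Z$ communication), so the joint distribution of $state_\X$ and all transcripts is \emph{literally unchanged}---no distribution-preservation claim is needed, and in particular no hypothesis on $\X$'s randomness. The randomness-head concerns you import from Definition~\ref{def:public-private-tapes} belong to the comparison of two genuinely different executions (real vs.\ ideal, as in Theorem~\ref{thm:ggv-implies-us}); here the executions being compared are identical. If your notion of delegation really altered the execution so that private/public separability became necessary, you would be proving a strictly weaker statement than Theorem~\ref{thm:compo-delreq}, which carries no such hypothesis.
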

\begin{proof}
  Fix $\lambda\in \N$. We prove by induction on $k$ that 
  $\ep_k(\lambda) \leq 2k\ep_1(\lambda)$. We omit $\lambda$
  for the rest of the proof.
 Let $k\geq 2$ and assume that 
  $\ep_{k-1}\leq (k-1)\ep_1$.
  Without loss of generality, for any $k$-representative
  deletion-requester $\Y$, we can assume $\Y$ is composed of $k$ ITIs
  $\Y_1,\ldots,\Y_k$ allowed to read and write on each other's
  incoming and outgoing tapes where each $\Y_i$ initiates $\pi$ with $\X$ at
  most once.

  We can represent the execution with $k$--representative $\Y$ and environment $\Z$ as an execution with $1$--representative $\Y_k$ and a new environment $\Z'$
  so that in the new execution, $\Y_k$ interacts with $\X$ and $\Y_1,\ldots,\Y_{k-1}$ are part 
  of the environment. Let $\textsc{exec}_1$ denote this $1$--representative execution.
  In particular, the view of $\Z'$ contains the view of $\Z$ and
  the views of $\Y_1,\ldots,\Y_{k-1}$.

  Consider a simulator $\S''$ such that 
  \begin{equation}\label{eq:prob1 borne}
   \max_{\Z,\Y,\D} \min_{\Ss'} \PR{\D,{state_\X,view_{\Z'}},{\Ss'(view_{\Z'}),view_{\Z'}}}
  \end{equation}
  attains the minimum. By assumption, the value of \eqref{eq:prob1 borne} is the 1--representative error of $\X$, which is exactly $\ep_1$.

  Now,  consider the $k$--representative error
  \begin{equation}\label{eq:S'}
    \ep_k=\max_{\Z,\Y,\D}\min_\Ss\Big| \Pr\big[\D(state_\X,view_\Z)=1\big] - \Pr\big[\D(\Ss(view_\Z),view_\Z)=1\big]\Big|\enspace .
  \end{equation}
 By the triangle inequality,  \eqref{eq:S'}  is upper-bounded by
   \begin{align}
     \ep_k&\leq \max_{\Z,\Y,\D}\PR{\D,{state_\X,view_\Z},{\Ss'(view_{\Z'}),view_\Z}}\label{eq:minS'} \\
        & \quad\qquad {}+ \max_{\Z,\Y,\D}\min_{\S}\PR{\D,{\Ss'(view_{\Z'}),view_\Z},{\Ss(view_\Z),view_\Z}}\nonumber\\
               &\leq \ep_1 +  \max_{\Z,\Y,\D}\min_{\Ss}\PR{\D,{\Ss'(view_{\Z'}),view_{\Z}},{\Ss(view_\Z)},view_{\Z}}\nonumber 
   \end{align}
   where the last inequality follows from~(\ref{eq:prob1 borne}) and the fact that giving the distinguisher less information in~(\ref{eq:minS'}), i.e. the view of $\Z$ instead of $\Z'$, can only decrease the distinguishing probability.

  We now bound the remaining part of the previous equation.
  Let $\Oo$ be the output of $\Ss'(view_{\Z'}^\X)$. We observe the two following facts about $\Oo$:
  \begin{enumerate}
  \item By construction of $\Ss'$, $\Oo$ is indistinguishable from the state of some data collector resulting from a valid execution.
  \item $view_{\Z'}^\X$ contains exactly the interactions between
    $\Y_1,\ldots,\Y_{k-1},\Z$ and $\X$. In particular, it has access to the views of the 1-representative deletion-requesters $\Y_1,\dots,\Y_{k-1}$, so it has the complete view of a valid $(k-1)$--representative execution.
  \end{enumerate}
  From the two above observations, we deduce that the output $\Oo$ of
  $\Ss'$ is indistinguishable from the state of the state of the data collector obtained as the
  result of a $(k-1)$--representative execution where $\Y_1,\dots,\Y_{k-1}$ and the environment $\Z$ interact with $\X$.
  Note that we do not require  this execution
  to be exactly as the $1$--representative execution $\textsc{exec}_1$ defined above, but only that it is a valid $(k-1)$--representative execution.
  Let $state_\X'$ be the state of $\X$ in this execution.
  As we have argued, $\Oo \approx_{\ep_1} state_\X'$ corresponds to the state of $\X$ in a $(k-1)$-representative execution. It follows from the inductive hypothesis that there exists a simulator $\Ss$ for any $(k-1)$--representative execution, and so by the triangle inequality
  \begin{equation}\label{eq:S'2}
    \max_{\Z,\Y,\D}\min_{\S}\PR{\D,{\Ss'(view_{\Z'}),view_\Z},{\Ss(view_\Z),view_\Z}}\leq \ep_1+ \ep_{k-1}\enspace .
  \end{equation}

  Combining the two bounds for~\eqref{eq:minS'}
  and~\eqref{eq:S'2}, we obtain
  $$
    \PR{\D,{state_\X,view_\Z},{\Ss(view_\Z),view_\Z}} \leq 2\ep_1 + \ep_{k-1} \leq 2k\ep_1,
  $$
  where we used the induction hypothesis $\ep_{k-1} \leq 2(k-1)\ep_1$.
\qed\end{proof}

For data-collectors, there are two natural ways of combining them: in parallel where $\Y$ may interact with either $\X_1$ or $\X_2$, and sequentially where $\Y$ interacts with $\X_1$ and $\X_1$ interacts with another data-collector $\X_2$ as would be the case when delegating parts of its storage or computation. 

\begin{definition}
  Given two data-collectors $(\joli X_j,\pi_j,\pi_{j,D})$, $j=1,2$, we define
  the \emph{parallel composition} $(({\cal X}_1,{\cal X}_2),(\pi_1,\pi_2),
  (\pi_{1,D},\pi_{2,D}))$ of ${\cal X}_1$ and ${\cal X}_2$ as the data-collector
  with the following behavior:
  \begin{itemize}
  \item the data-collector $(\joli X_1,\joli X_2)$ runs an instance of each
    machine $\joli X_1$ and $\joli X_2$, each running independently of one
    another (i.e. having no access to each other's incoming/outgoing tapes);
  \item when $\cal Y$ or $\cal Z$ initiates an instance of $\pi_i$ (resp.
    $\pi^D_i$), $i\in \{1,2\}$, with $({\cal X}_1,{\cal X}_2)$, it runs protocol
    $\pi_i$ (resp. $\pi^D_i$) with machine $\joli X_i$;
  \item the state of $(\joli X_1,\joli X_2)$ consists of the state of machines
    $\joli X_1$ and $\joli X_2$: $state_{(\joli X_1,\joli X_2)}= (state_{\joli
      X_1},state_{\joli X_2})$.
  \end{itemize}
\end{definition}

\begin{theorem}\label{theo:par comp}
  If $(\joli X_1,\pi_1,\pi_{1,D})$ and $(\joli X_2,\pi_2,\pi_{2,D})$
  are statistical (resp. computational) \del{} with error $\ep_1$ and
  $\ep_2$, respectively, then their parallel composition is also
  statistical (resp. computational) \del{} with error $\ep_1+\ep_2$.
\end{theorem}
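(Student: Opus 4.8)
The plan is to assemble the composed simulator from the two given ones and prove indistinguishability by a two-step hybrid. Since $\X_1$ and $\X_2$ are \del{} with errors $\ep_1,\ep_2$, fix the (universal) simulators $\S_1,\S_2$ that witness this. For the composition I would take
\[
 \S(view_\Z) = \big(\S_1(view_\Z^{\X_1}),\ \S_2(view_\Z^{\X_2})\big),
\]
where $view_\Z^{\X_1}$ (resp.\ $view_\Z^{\X_2}$) is the restriction of $\Z$'s view to its $\pi_1/\pi_{1,D}$ (resp.\ $\pi_2/\pi_{2,D}$) sessions. This split is well defined precisely because the definition of parallel composition routes every session with $(\X_1,\X_2)$ to exactly one of the two machines. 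Using $state_{(\X_1,\X_2)}=(state_{\X_1},state_{\X_2})$ and $view_\Z=(view_\Z^{\X_1},view_\Z^{\X_2})$, the goal reduces to showing that $(state_{\X_1},state_{\X_2},view_\Z)$ is $(\ep_1+\ep_2)$--indistinguishable from $(\S_1(view_\Z^{\X_1}),\S_2(view_\Z^{\X_2}),view_\Z)$.

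I would interpolate through the single hybrid $H_1=(\S_1(view_\Z^{\X_1}),state_{\X_2},view_\Z)$, with $H_0$ the real tuple and $H_2$ the fully simulated one, and bound $\Delta(H_0,H_1)\le\ep_1$ and $\Delta(H_1,H_2)\le\ep_2$, so the triangle inequality yields the claimed error with no extra factor. Each bound is a reduction to the \dels{} of one collector. For the first step I would run the $\X_1$-game against the environment $\Z'=(\Z,\X_2)$ that executes $\Z$ together with an internal independent copy of $\X_2$ and the $\X_2$-facing part of $\Y$, taking the $\X_1$-facing part of $\Y$ as the deletion-requester $\Y'$. Because $\X_1$ and $\X_2$ share no tapes, $\Z'$ can forward every $\pi_1/\pi_{1,D}$ message to the external $\X_1$ and answer every $\pi_2/\pi_{2,D}$ message internally, so that $view_{\Z'}^{\X_1}=view_\Z^{\X_1}$ and the external state is exactly $state_{\X_1}$. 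Since $\S_1$ works for all environments, the \dels{} of $\X_1$ gives $(state_{\X_1},view_\Z^{\X_1})\approx_{\ep_1}(\S_1(view_\Z^{\X_1}),view_\Z^{\X_1})$, and the symmetric construction handles the second step with $\S_2,\ep_2$.

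The step I expect to be the main obstacle is lifting this single-collector guarantee, which only speaks of the pair $(\,\cdot\,,view_\Z^{\X_1})$, to the full tuples $H_0,H_1$, which additionally carry $state_{\X_2}$ and $view_\Z^{\X_2}$: a distinguisher against $\X_1$ receives only $view_\Z^{\X_1}$, whereas the composed distinguisher also needs the $\X_2$-side data, and an adversarial environment could in principle correlate $\X_2$'s stored state with $\X_1$'s private randomness. The resolution hinges on the independence of the two machines. The only channel from $\X_1$ to the $\X_2$-side is through $\X_1$'s outgoing messages, all of which already appear in $view_\Z^{\X_1}$; hence, conditioned on $view_\Z^{\X_1}$, the pair $(state_{\X_2},view_\Z^{\X_2})$ is a randomized function of the environment's and $\X_2$'s own coins that is independent of $\X_1$'s residual private randomness. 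Equivalently, $\Z'$ generates this pair internally and can hand it to the reduction's distinguisher without disturbing the external $\X_1$-execution. I would make this precise by observing that $\S_1(view_\Z^{\X_1})$ is itself a genuine function of $view_\Z^{\X_1}$ and $\S_1$'s fresh coins, so it is trivially conditionally independent of the $\X_2$-components given $view_\Z^{\X_1}$; thus replacing $state_{\X_1}$ by $\S_1(view_\Z^{\X_1})$ leaves the conditional distribution of $(state_{\X_2},view_\Z^{\X_2})$ untouched and the reduction goes through. This conditional-independence point, rather than the hybrid bookkeeping, is where the no-shared-tapes assumption does all the work and is the only place a naive reduction would break.
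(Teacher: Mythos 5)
Your simulator and two-step hybrid coincide exactly with the paper's proof: the paper also sets $\S(view_\Z)=(\S_1(view_\Z^{\X_1}),\S_2(view_\Z^{\X_2}))$, folds $\X_2$ into the environment as $\Z'=(\X_2,\Z)$, notes $view_{\Z'}^{\X_1}=view_\Z^{\X_1}$ because the two machines share no tapes, and concludes by the triangle inequality. You have, moreover, correctly isolated the one step the paper treats as immediate: the guarantee for $\X_1$ only bounds distinguishers that see the pair $(\,\cdot\,,view_\Z^{\X_1})$, while the hybrids $H_0,H_1$ additionally carry $state_{\X_2}$ and $view_\Z^{\X_2}$. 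But your resolution of that step rests on a false premise. You assert that ``the only channel from $\X_1$ to the $\X_2$-side is through $\X_1$'s outgoing messages, all of which already appear in $view_\Z^{\X_1}$.'' This ignores $\X_1$'s messages to $\Y$: in this paper's model the GGV restriction is explicitly lifted, so $\Y$ may write on $\Z$'s incoming tapes, and in the parallel composition $\Y$ also interacts with $\X_2$ directly. Hence data derived from $\X_1$'s private randomness can travel along $\X_1\to\Y\to\Z\to\X_2$ and lodge itself in $state_{\X_2}$ and $view_\Z^{\X_2}$ without ever appearing in $view_\Z^{\X_1}$. Concretely, let $\X_1$ hold a private uniform string $b\in\bool^\lambda$ and answer a reveal query with $b$ (deletion a no-op); this $\X_1$ is \del{} with error $0$, since whenever $\Z$ does not query, $b$ is independent of $view_\Z^{\X_1}$, and otherwise $b$ appears in it. Now let $\Y$ query $b$ and forward it to $\Z$, and let $\Z$ store the received value $v$ in a compliant dictionary $\X_2$. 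Conditioned on the (empty) $view_\Z^{\X_1}$, the pair $(state_{\X_2},view_\Z^{\X_2})$ then determines $state_{\X_1}=b$ exactly, so your conditional-independence claim fails; worse, comparing this $\Y$ with a $\Y$ that forwards an independent junk string $u$ (the two give $\Z$ identically distributed views), an equality-testing distinguisher forces any fixed simulator to err with probability close to $1/2$ in one of the two executions, so the bound $\Delta(H_0,H_1)\leq\ep_1$ cannot be recovered along these lines.

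A second, related soft spot: your reduction has $\Z'$ ``hand'' the internally generated $(state_{\X_2},view_\Z^{\X_2})$ to the reduction's distinguisher, but Definition~\ref{def:weak del} provides no such channel --- the distinguisher in the $\X_1$-game receives only $(state_{\X_1},view_{\Z'}^{\X_1})$, and the protocol $\pi_1$ need not allow the environment to embed auxiliary data into its interaction with $\X_1$. For what it is worth, the paper's own write-up asserts the two hybrid bounds without confronting this cross-correlation either; the argument is sound if $\Y$-to-$\Z$ communication is forbidden as in the strong model (then every $\Y$-tainted session of $\pi_2$ is deleted in the terminate phase and the channel above closes), or if the $\X_2$-side components are efficiently samplable given $view_\Z^{\X_1}$. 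So your instinct about where the difficulty lives was exactly right, but the conditional-independence argument you offer to discharge it is the step that breaks, and in the unrestricted model it breaks for a structural reason, not a bookkeeping one.
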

\begin{proof}
  The proof is identical for statistical and computational flavors of \dels{}.
  The interactions of $\Y$ and $\Z$ with $(\X_1,\X_2)$ in the real execution can be represented by an equivalent execution with $\X_1$ and $\X_2$:
  \begin{center}
  \begin{tikzpicture}[thick,<->]
    \begin{scope}[]
      \node (Y) at (0,1) {$\cal Y$};
      \node (X12) at (2,1) {$(\joli X_1,\joli X_2)$};
      \node (Z) at (2,0)  {$\joli Z$};
      \draw (Y) edge (X12) ;
      \draw (Y) edge (Z);
      \draw (X12) edge (Z);
    \end{scope}

    \node at (3,.5) {$\equiv$};
    
    \begin{scope}[xshift=4cm]
    \node (Y) at (2,0) {$\cal Y$};
    \node (X1) at (0,1) {$\joli X_1$};
    \node (X2) at (0,0) {$\joli X_2$};
    \node (Z) at (2,1)  {$\joli Z$};
    \draw (Y) edge (X1);
    \draw (Y) edge (X2);
    \draw (Y) edge (Z);
    \draw (X2) edge (Z);
    \draw (X1) edge (Z);
    \draw (X2) edge (Y);
    \draw (X1) edge (Y);
  \end{scope}
  \end{tikzpicture}
  \end{center}

  We can give two interpretation to this execution to invoke the
  \dels{} of $\X_i$: we can combine $\X_2$ to the environment and
  $\X_1$ becomes the data-collector, or we can combine $\X_1$ to the
  environment and $\X_2$ becomes the data-collector.
  
  Let $\Z'$ denote the new environment in both executions.
In the first case, we see that $view_{\Z'}^{\X_1}=view^{\X_1}_{(\X_2,\Z)} = view^{\X_1}_{\Z}$, 
  because $\X_1$ and $\X_2$ do not interact together. Similarly, in the
  second case, we have $view_{\Z'}^{\X_2}=view^{\X_2}_{(\X_1,\Z)} = view^{\X_2}_{\Z}$.
  By the \dels\ property of $\X_1$ and $\X_2$, there exist two
  simulators $\Ss_1$ and $\Ss_2$ such that 
  $$
    \Ss_1(view^{\X_1}_{(\X_2,\Z)}) \approx_{\ep_1} state_{\X_1} 
    \qquad\hbox{ and }\qquad
    \Ss_2(view^{\X_2}_{(\X_1,\Z)}) \approx_{\ep_2} state_{\X_2}\enspace .
  $$
  Since we have $view^{\X_j}_{(\X_{3-j},\Z)} = view^{\X_j}_{\Z}$ for $j\in\{1,2\}$,
  and because $view^{\X_j}_{\Z}$ is contained in $view^{(\X_1,\X_2)}_{\Z}$,
  we can construct a simulator $\S$ that computes $view^{\X_j}_{\Z}$ from $view^{(\X_1,\X_2)}_{\Z}$, and outputs
  $$
    \Ss(view^{(\X_1,\X_2)}_{\Z}) = \big(\Ss_1(view^{\X_1}_{\Z}), \Ss_1(view^{\X_2}_{\Z})\big)\enspace .
  $$
  By the triangle inequality, we have
  \begin{align*}
    &\PR{\D,{state_\X,view_\Z},{\Ss(view_\Z),view_\Z}}\\
    &= \Big| \Pr[\D((\Ss_1(view^{\X_1}_{\Z}),\Ss_1(view^{\X_2}_{\Z}),view_\Z)=1] \\
    &\qquad - \Pr[\D((state_{\X_1}, state_{\X_2}),view_\Z)=1] \Big|\\
    &\leq  \Big| \Pr[\D((\Ss_1(view^{\X_1}_{\Z}),\Ss_1(view^{\X_2}_{\Z})),view_\Z)=1]\\
    &\qquad - \Pr[\D((state_{\X_1}, \Ss_1(view^{\X_2}_{\Z})),view_\Z)=1] \Big|\\
    &\quad + \Big| \Pr[\D((state_{\X_1}, \Ss_1(view^{\X_2}_{\Z})),view_\Z)=1] \\
    &\qquad- \Pr[\D((state_{\X_1}, state_{\X_2}),view_\Z)=1] \Big|\\
    &\leq \ep_1+\ep_2
 \end{align*}
  as required.
\qed\end{proof}

Sequential composition of data-collectors is defined as follows. 

\begin{definition}
Given two data-collectors $(\joli X_j,\pi_j,\pi_{j}^{D})$, $j=1,2$, we
define the \emph{sequential composition} $({\cal X}_2\circ{\cal
X}_1,\pi_2\circ\pi_1, \pi_{2}^{D}\circ\pi_{1}^{D})$ of ${\cal X}_1$ and
${\cal X}_2$ as the data-collector with the following behavior:
\begin{enumerate}
  \item the data-collector $\joli X_2\circ\joli X_1$ runs an instance of
    each machine $\joli X_1$ and $\joli X_2$, each running independently
    of one another (i.e. having no access to each other's 
    tapes);
  \item whenever $\cal Y$ or $\cal Z$ initiates an instance of
    $\pi_2\circ\pi_1$ with machine $\joli X_2\circ \joli X_1$, run
    protocol $\pi_1$ between machine $\joli X_1$ and $\cal Y$ (or
    $\cal Z$) which in turn initiates an instance of $\pi_2$ between
    $\joli X_1$ and $\joli X_2$;
  \item the state of $\joli X_2\circ\joli X_1$ consists of the state of
    machines $\joli X_1$ and $\joli X_2$:
    $state_{\joli X_2\circ\joli X_1}= (state_{\joli X_1},state_{\joli
      X_2})$.
\end{enumerate}
\end{definition}

Our goal is to show that if $\X_1$ and $\X_2$ are both \del, then their sequential composition is \del{}. To do so, we need to reconstruct the state of both $\X_1$ and $\X_2$ using their respective simulators. The difficulty is that in the case of $\X_2\circ\X_1$, the simulator only has access to the view of $\Z$, and not the view of the interactions between $\X_1$ and $\X_2$. This view may contain elements that are necessary to reconstruct the individual states, but that are inaccessible to the simulator. To circumvent this issue, we add the restriction that the view between $\X_1$ and $\X_2$ for the protocol sessions that were first initiated by $\Z$ can be simulated by $\S$ having only access to the view of $\Z$. We call property $\ep$--\emph{independence} if this view can be reconstructed with error $\ep$. It is satisfied for example when the messages between $\X_1$ and $\X_2$ in a session of $\pi_2$ that follows $\pi_1$ initiated by $M$ is an efficient function of only the previous interactions of $M$ and $\X_1$. In this scenario, seeing all of the interactions between $M$ and $\X_1$ is sufficient to replicate the resulting interactions between $\X_1$ and $\X_2$.
\begin{definition}\label{def:independent-comp}
  Let $M$ be an arbitrary ITI and let $view_{\X_1\mid M}^{\X_2}$ denote the view between $\X_1$ and $\X_2$ restricted to the instances of $\pi_2$ that follow an instance of $\pi_1$ initiated by $M$. 
  We say that the sequential composition $\X_2\circ\X_1$ of $\X_1$ and $\X_2$ is (computationally) $\ep$--\emph{independent} if there exists a \ppt{} machine $\mathcal{V}$ such that for all ITI $M$ interacting with $\X_1$ using protocol $\pi_1$, 
  and for all (\ppt{}) distinguisher $\D$,
  \begin{equation*}
    \left| \Pr[\D(\mathcal{V}(view_M^{\X_1}))=1]-\Pr[\D(view_{\X_1\mid M}^{\X_2})=1] \right|\leq \ep(\lambda)\enspace .
  \end{equation*}
\end{definition}

\begin{theorem}\label{theo:seq comp}
  If $(\X_1,\pi_1,\pi_{1}^{D})$ and $(\X_2,\pi_2,\pi_{2}^{D})$ are
  (computationally) \del\ data-collectors with respective error $\ep_1$ and $\ep_2$, then the (computationally) $\ep$--independent sequential composition 
  $(\X_2\circ \X_1, \pi_2\circ \pi_1, \pi_{2}^{D}\circ \pi_{1}^{D})$
  is (computationally) \del{} with error $\ep_1+\ep_2+2\ep$.
\end{theorem}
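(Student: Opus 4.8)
The plan is to build a simulator $\S$ for the sequential composition $\X_2 \circ \X_1$ by combining the two individual simulators $\S_1$ and $\S_2$ guaranteed by the \dels{} of $\X_1$ and $\X_2$, using the independence machine $\mathcal{V}$ from Definition~\ref{def:independent-comp} to bridge the gap caused by the fact that $\S$ only sees $view_\Z^{\X_2\circ\X_1}$ and not the internal interactions $view_{\X_1}^{\X_2}$. Concretely, $\S$ will first run $\S_1$ on $view_\Z^{\X_1}$ (which is extractable from the view of $\Z$ since $\Z$ drives $\X_1$ directly) to reconstruct $state_{\X_1}$. The obstacle is recovering $state_{\X_2}$: the simulator $\S_2$ requires the view of the environment that $\X_2$ interacts with, but from $\X_2$'s perspective that environment is $\X_1$, and the relevant interactions $view_{\X_1}^{\X_2}$ for sessions originating from $\Z$ are precisely what is hidden. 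This is where $\ep$--independence enters: I would apply $\mathcal{V}$ to the portion of $view_\Z^{\X_1}$ corresponding to $\Z$-initiated sessions to produce a simulated transcript $\mathcal{V}(view_{\Z}^{\X_1})$ that is $\ep$--indistinguishable from the true $view_{\X_1\mid\Z}^{\X_2}$, then feed this simulated transcript into $\S_2$ to obtain a candidate state for $\X_2$. The output of $\S$ is then the pair $\big(\S_1(view_\Z^{\X_1}),\ \S_2(\mathcal{V}(view_\Z^{\X_1}))\big)$.

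\textbf{The main technical steps} proceed by a hybrid argument with three transitions, bounding the distance between $(state_{\X_1},state_{\X_2},view_\Z)$ and $(\S_1(\cdot),\S_2(\mathcal V(\cdot)),view_\Z)$. First I would swap the true $state_{\X_1}$ for $\S_1(view_\Z^{\X_1})$; since $\X_1$ and $\X_2$ do not share tapes and $\Z$ interacts with $\X_1$ directly, $view_\Z^{\X_1}$ is exactly the view of the environment of $\X_1$ when $\X_2$ is folded into that environment, so the \dels{} of $\X_1$ bounds this step by $\ep_1$. Second, I would swap the true internal transcript $view_{\X_1\mid\Z}^{\X_2}$ feeding into the $\X_2$-hybrid for its simulated counterpart $\mathcal V(view_\Z^{\X_1})$; this step is bounded by $\ep$ via $\ep$--independence, where the distinguisher is the composition of $\S_2$, the outer distinguisher $\D$, and the surrounding fixed quantities, all of which is still \ppt{} in the computational case. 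Third, I would swap the true $state_{\X_2}$ (relative to the now-simulated transcript) for $\S_2$ applied to that transcript, invoking the \dels{} of $\X_2$ where $\X_1$ together with $\Z$ plays the role of $\X_2$'s environment; this contributes $\ep_2$. Summing the three bounds by the triangle inequality yields the claimed error $\ep_1 + \ep_2 + 2\ep$, with the extra factor of $\ep$ arising because the independence swap must be accounted for once when establishing that $\S_2$ is fed an indistinguishable transcript and again when arguing the resulting state matches a genuine $\X_2$ execution.

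\textbf{The hard part will be} making precise the claim that folding one data-collector into the environment of the other preserves the hypotheses of \dels{}, in particular that the environment $\Z'$ obtained by combining $\X_1$ (or $\X_2$) with $\Z$ remains an admissible \ppt{} environment and that the relevant view restrictions $view_{\Z'}^{\X_2}$ coincide with what $\S_2$ expects as input. A second subtlety is the ordering and dependency of the two simulations: because $\S_2$ consumes the \emph{output} of $\mathcal V$ rather than a genuine view, I must ensure that the \dels{} guarantee for $\X_2$ is invoked against the distribution induced by the \emph{real} interactions first, and only then replace that distribution by the simulated one, so that the two roles of $\ep$ compose correctly rather than circularly. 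Care is also needed to confirm that $\mathcal V$ is applied only to $\Z$-initiated sessions (matching the restriction in Definition~\ref{def:independent-comp}) and that $\Y$-initiated sessions, which are deleted during the terminate phase, do not leave residual internal transcripts that $\S_2$ would need but cannot reconstruct.
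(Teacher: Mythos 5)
Your overall architecture is close to the paper's: decompose the composed collector, fold the counterpart machine into the environment of each $\X_j$, and use the independence machine $\mathcal{V}$ of Definition~\ref{def:independent-comp} to reconstruct internal transcripts that the composed simulator cannot see. But your concrete simulator has a genuine gap in its first component. You run $\S_1(view_\Z^{\X_1})$ and justify it by claiming that $view_\Z^{\X_1}$ is ``exactly the view of the environment of $\X_1$ when $\X_2$ is folded into that environment.'' This is false. Once $\X_2$ (more precisely, its $\Z$-facing part $\X_2^\Z$, with the $\Y$-facing part $\X_2^\Y$ folded into a new deletion-requester $\Y'=(\Y,\X_2^\Y)$, as the paper does to keep the execution valid) becomes part of the environment $\Z'=(\Z,\X_2^\Z)$, the view $view_{\Z'}^{\X_1}$ that the \dels{} guarantee hands to $\S_1$ \emph{includes the $\pi_2$ transcripts between $\X_1$ and $\X_2^\Z$}, since $\X_2^\Z$ is an ITI under $\Z'$'s control interacting with $\X_1$. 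The guarantee only says \emph{some} simulator works on that full environment view; nothing licenses running it on the strict sub-view $view_\Z^{\X_1}$, and $state_{\X_1}$ can genuinely depend on $\X_2$'s responses in those $\pi_2$ sessions. The fix is the paper's: apply the independence machine a \emph{second} time to obtain $\mathcal{V}_2(view_\Z^{\X_1})\approx_\ep view_{\X_2\mid\Z}^{\X_1}$ and feed $\S_1$ the enriched view $view_{\Z'}^{\X_1}$ so reconstructed.

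This also corrects your error accounting. The $2\ep$ in the theorem arises from one application of $\mathcal{V}$ \emph{per simulator} ($\ep_1+\ep$ on the $\X_1$ side, $\ep_2+\ep$ on the $\X_2$ side), not from charging $\ep$ twice on the $\S_2$ side as you propose (``once when establishing that $\S_2$ is fed an indistinguishable transcript and again when arguing the resulting state matches a genuine $\X_2$ execution''): invoking \dels{} of $\X_2$ against the real execution with environment $(\X_1^\Z,\Z)$ costs only $\ep_2$, and swapping the real internal transcript for $\mathcal{V}_1(view_\Z^{\X_1})$ inside $\S_2$ costs only $\ep$ (your observation that this distinguisher must compose $\S_2$ with $\D$ and remain \ppt{} in the computational case is correct and matches the paper's implicit reasoning). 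Your closing paragraph does flag the right admissibility subtleties -- that the folded machines form a valid deletion-requester/environment pair, and that $\mathcal{V}$ only covers $\Z$-initiated sessions -- and the paper resolves the $\Y$-derived transcripts exactly by the $\X_2^\Y$/$\X_2^\Z$ split, so those transcripts end up inside the deletion-requester $\Y'$ rather than in the view $\S_1$ must be given; but as written, your simulator's first component is not backed by the \dels{} hypothesis.
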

\begin{proof}
  We construct a simulator $\Ss$ such that given the view
  $view_\Z^{\X_2\circ \X_1}$, produces an output indistinguishable
  from $(state_{\X_1}, state_{\X_2})$. To do this, we rely on the \dels\ property of $\X_j$ for $j=1,2$ 
  and assume the existence of two simulators $\Ss_1$ and $\Ss_2$ such that $\Ss_j$ can recreate the state of $\X_j$ in an execution from the view of the environment. We now show how  to simulate the state of $\X_2 \circ \X_1$ from the view of $\Z$ using $\Ss_1$ and $\Ss_2$.

  \begin{figure}[h]
    \centering
    \begin{subfigure}[b]{0.3\textwidth}
      \centering
      \begin{tikzpicture}[thick,<->]
        \begin{scope}[]
          \node (Y) at (0,2) {$\cal Y$};
          \node (X1) at (2,2) {$\X_1$};
          \node (X2) at (3.5,2) {$\X_2$};
          \node (Z) at (2,0) {$\Z$.};
          \draw (Y) edge (X1);
          \draw (X1) edge (X2); 
          \draw (Y) edge (Z);
          \draw (Z) edge (X1);
          \draw (Z) edge (X2);
          \draw (1.75,1.75) rectangle (3.75,2.25);
        \end{scope}
      \end{tikzpicture}
      \caption{Real execution with data collector $\X_2\circ \X_1$.}
      \label{fig:simX1A}
    \end{subfigure}
    \hfill
    \begin{subfigure}[b]{0.3\textwidth}
      \centering
      \begin{tikzpicture}[thick,<->]
        \begin{scope}[]
          \node (Y) at (0,2) {$\cal Y$};
          \node (X1) at (2,3) {$\X_1^{\Y}$};
          \node (X11) at (2,1.5) {$\X_1^{\Z}$};
          \node (X2) at (3.5,2) {$\X_2$};
          \node (Z) at (2,0) {$\Z$,};
          \draw (Y) edge (X1);
          \draw (X1) edge (X2); 
          \draw (Y) edge (Z);
          \draw (Z) edge (X11);
          \draw (Z) edge (X2);
          \draw (X1) -- (X11);
          \draw (X11) -- (X2);
          \draw (1.75,1.25) rectangle (2.25,3.25);
        \end{scope}
      \end{tikzpicture}
      \caption{Simulate $\X_1$ with two new ITIs.}
      \label{fig:simX1B}
    \end{subfigure}
    \hfill
    \begin{subfigure}[b]{0.3\textwidth}
      \centering
      \begin{tikzpicture}[thick,<->]
        \begin{scope}[]
          \node (Y) at (0,3) {$\cal Y$};
          \node (X1) at (2,3) {$\X_1^{\Y}$};
          \node (X11) at (2,1.5) {$\X_1^{\Z}$};
          \node (X2) at (3.5,2) {$\X_2$};
          \node (Z) at (2,0) {$\Z$,};
          \draw (Y) edge (X1);
          \draw (X1) edge (X2);
          \draw (Y) edge (Z);
          \draw (Z) edge (X11);
          \draw (Z) edge (X2);
          \draw (X1) -- (X11);
          \draw (X11) -- (X2);
          \draw (-.25,2.75) rectangle (2.25,3.25);
        \end{scope}
      \end{tikzpicture}
      \caption{Defining a new deletion requester and environment.}
      \label{fig:simX1C}
    \end{subfigure}
    \caption{Simulating the state of $\X_1$ for sequential composition.}
    \label{fig:simX1}
  \end{figure}

  We begin by constructing the state of $\X_2$ using $\Ss_2$. Observe
  that in the real execution (Fig.~\ref{fig:simX1A}), we can divide
  $\X_1$ into two parts $\X_1^{\Y}$ and $\X_1^{\Z}$ as follows
  (Fig.~\ref{fig:simX1B}). Machine $\X_1^{\Y}$ interacts only with
  $\Y$ and machine $\X_1^{\Z}$ only with $\Z$. We allow $\X_1^\Y$ and
  $\X_1^\Z$ to arbitrarily message each other to faithfully replicate
  the behavior of $\X_1$ from $\Y$'s and $\Z$'s points of view. Note
  that neither machine knows whether it is interacting with $\Y$ or
  $\Z$ since in reality, it is interacting with ITIs under their
  control, but this is not necessary for the proof; we only require
  that those machines are well-defined once we fix $\Y$ and $\Z$.  We
  can combine $\X_1^{\Y}$ with $\Y$ to define a new deletion-requester
  $\Y'$, and $\X_1^{\Z}$ with $\Z$ to define a new environment
  $\Z'$. Note that this corresponds to a valid execution with data
  collector $\X_2$, deletion requester $\Y'$ and environment $\Z'$
  because any protocol $\pi_2$ initiated by $\Y'$ will be followed by
  the corresponding $\pi_2^D$ in the terminal phase since those are
  exactly the protocols $\pi_2$ that follow the $\pi_1$'s initiated by
  $\Y$. By assumption there exists a simulator $\Ss_2$ that can
  simulate the state of $\X_2$ in any execution from the view of the
  environment $\Z'$. We thus have that
  $\Ss_2(view_{\Z'}^{\X_2})\approx state_{\X_2}$. By the assumption
  that the composition is $\ep$--independent, $view_{\Z'}^{\X_2}$,
  which consists of the view of any protocol between $\X_1^{\Z}$ and
  $\X_2$ initiated by $\Z$, corresponds to $view_{\X_1\mid \Z}^{\X_2}$ from Definition~\ref{def:independent-comp}, and can thus be (approximately) computed
  efficiently from $view_\Z^{\X_1}$ using a function $\mathcal{V}_1$.

  To simulate the state of $\X_1$ using $\Ss_1$, the idea is
  similar. Going back to the real execution (Fig.~\ref{fig:simX1A}),
  we can separate $\X_2$ in two parts: $\X_2^{\Y}$ receives the
  requests from $\X_1$ originating from $\Y$ and $\X_2^{\Z}$ receives
  those from $\X_1$ originating from $\Z$ (again, neither machine
  knows which party initiated the session of $\pi_1$ that resulted in
  its interaction with $\X_1$).  The two components can communicate in
  the interest of simulating faithfully the execution of $\X_2$.
  This alternative execution is depicted in Fig.~\ref{fig:simX2A}.
  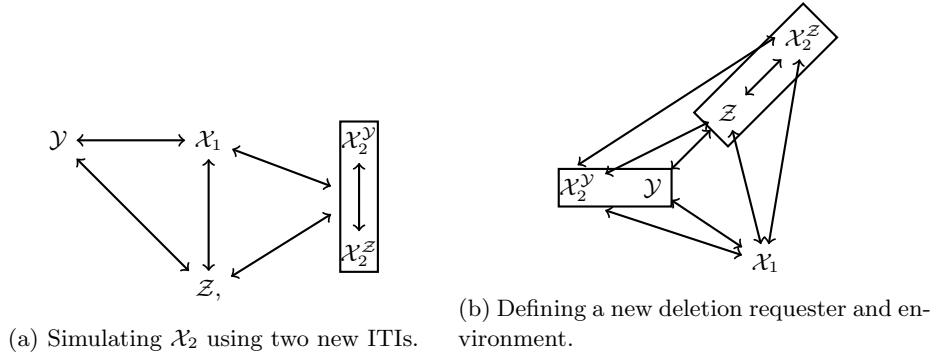
\begin{figure}[h]
    \centering
    \begin{subfigure}[b]{0.45\linewidth}
      \centering
      \begin{tikzpicture}[thick,<->]
        \begin{scope}[]
          \node (Y) at (0,2) {$\cal Y$};
	  \node (X1) at (2,2) {$\X_1$};
          \node (X22) at (4,.5) {$\X_2^{\Z}$};
          \node (X21) at (4,2) {$\X_2^{\Y}$};
	  \node (X3) at (4,1.25) {$\phantom{\X_2^{\Z}}$};
          \node (Z) at (2,0) {$\Z$,};
          \draw (Y) edge (X1);
          \draw (Y) edge (Z);
          \draw (Z) edge (X1);
          \draw (Z) edge (X3);
	  \draw (X1) -- (X3);
	  \draw (X21) -- (X22);
          \draw (3.75,.25) rectangle (4.25,2.25);
        \end{scope}
      \end{tikzpicture}
      \caption{Simulating $\X_2$ using two new ITIs.}
      \label{fig:simX2A}
    \end{subfigure}
    \begin{subfigure}[b]{0.45\linewidth}
      \centering
      \begin{tikzpicture}[thick,<->]
        \begin{scope}[]
	  \node (X21) at (0,2) {$\X_2^{\Y}$};
	  \node (Y) at (1,2) {$\Y$};
	  \node (X1) at (2.5,1) {$\X_1$};
	  \node (Z) at (2,3) {$\Z$};
	  \node (X22) at (3,4) {$\X_2^{\Z}$};
	  \draw (Z) -- (X22);
	  \draw (X1) -- (Z);
	  \draw (X1) -- (Y);
	  \draw (X1) -- (X21.south east);
	  \draw (X1) -- (X22);
	  \draw (Y) -- (Z);
	  \draw (Z) -- (X21);
	  \draw (X21.north) -- (X22.west);
	  \draw (-.25,1.75) rectangle (1.25,2.25);
	  \draw (2,2.55)--(1.55,3)--(3,4.45)--(3.45,4)--cycle;
        \end{scope}
      \end{tikzpicture}
      \caption{Defining a new deletion requester and environment.}
      \label{fig:simX2B}
    \end{subfigure}
    \caption{Simulating the state of $\X_2$ for sequential composition.}
  \end{figure}
  We now define a new environment $\Z'=(\Z,\X_2^{\Z})$ and a new
  deletion-requester $\Y'=(\Y,\X_2^\Y)$
  (Fig.~\ref{fig:simX2B}). The new machines $\Y'$ and $\Z'$ now interact with $\X_1$ 
  through the protocols $\pi_2\circ \pi_1$ and
  $\pi_{2}^{D}\circ \pi_{1}^{D}$, e.g. when $\Y'$ initiates $\pi_1$ with $\X_1$, $\X_1$ initiates a session of $\pi_2$ with $\Y'$, and after $\pi_2$ concludes, they finish the session of $\pi_1$.
Since every protocol between
  $\Y'$ and $\X_1$ is initiated by $\Y$ (and thus followed by the
  corresponding deletion request), machine $\Y'$ still satisfies the
  definition of a deletion requester. 
  Since $\X_1$ is \del, there exists a simulator $\Ss_1$
  such that $\Ss_1(view_{\Z'}^{\X_1}) \approx state_{\X_1}$.
  Note that $view_{\Z'}^{\X_1}$ can be efficiently computed from
  $view_\Z^{\X_1}$ by the assumption of $\ep$--independence: $view_{\Z'}^{\X_1}$
  corresponds to $view_{\X_2\mid \Z}^{\X_1}$, so there is an efficient algorithm
  $\mathcal{V}_2$ such that $\mathcal{V}_2(view_\Z^{\X_1})\approx_\ep
  view_{\Z'}^{\X_1}$.
  
  To conclude the proof, we define the simulator $\Ss$ that,
  given $view_\Z^{\X_2\circ \X_1}$, it first computes 
  $view_{(\Z,\X_2^{\Z})}^{\X_1}$ and $view_{(\X_1^{\Z},\Z)}^{\X_2}$ using $\mathcal{V}_1$ and $\mathcal{V}_2$, then applies the corresponding simulator: on input $view_\Z$, $\S$ outputs  
  \begin{equation*}
  \big(\Ss_1(\mathcal{V}_1(view_{\Z})), \Ss_2(\mathcal{V}_2(view_{\Z}))\big)\approx_{\ep'}state_{\X_2\circ\X_1}
\end{equation*}
where by the triangle inequality, $\ep'\leq \ep_1+\ep_2+2\ep$ 
\qed\end{proof}

\subsection{History Independence Implies \DELs{}}
\label{sec:hist-indep-data}

History independence is a concept introduced by Naor and Teague~\cite{NT-history} to capture the notion that the memory representation of an abstract data structure (ADS) does not reveal any more information than what can be inferred from the content or state of the data structure. In particular, a history independent implementation does not reveal the history of operations that lead to its current state. 

As a simple example of a history independent data structure, consider the abstract data structure consisting of a set with insertion and deletion.  A history independent implementation of this ADS could consist of keeping the elements from the set in a sorted list. The memory representation (the list) is uniquely determined by the state of the ADS (the elements in the set). More generally, any implementation of an ADS that has a \emph{canonical} representation for each ADS state is history independent~\cite{hartline_characterizing_2005}.

The original work on deletion-compliance~\cite{GGV} uses the example of a history independent dictionary from~\cite{NT-history} to build a strongly deletion-compliant data-collector that acts as a store of information. 
We show that when we drop the privacy requirements of strong deletion-compliance and instead adopt the weaker definition, history independence is a sufficient condition to obtain
\dels{} for a broad class of data-collectors. We will use the more general 
definition of history independence for arbitrary ADS
of~\cite{hartline_characterizing_2005}.

An abstract data structure can be represented as a graph where the nodes are the possible states of the ADS and the edges are operations that send the ADS from one state to the next. Let $A$ and $B$ be states of an ADS. A sequence of operations $S$ (i.e. a path in the graph) that takes state $A$ to state $B$ is denoted by $A\tounder S B$. The notation ``$a\in A$'' means that $a$ is a \emph{memory representation} of ADS state $A$ for some implied implementation of the ADS. We let $\Pr[a\tounder S b]$ denote the probability that, starting from memory representation $a$ of state $A$, the sequence of operations $S$ maps to the memory representation $b$ of state $B$

\begin{definition}[Strong History Independence~\cite{hartline_characterizing_2005}]\label{def:history-independence}
  An ADS implementation is strongly history
  independent if, for any two sequences of operation
  $S$ and $T$ that take the data structure from state $A$
  to state $B$, the distribution over memory representations after $S$
  is identical to the distribution after $T$. That is,
  $$
    (A \tounder S B)
    \text{ and } 
    (A \tounder T B)
    \Longrightarrow
    \forall a\in A, \forall b\in B,\; \Pr[a \tounder S b]
    = \Pr[a \tounder T b].
  $$
\end{definition}

We define deletion in the
context of abstract data structure as follows.
\begin{definition}\label{def:del op}
  A \emph{deletion operation} of an ADS
  operation $p$ is an operation $p_D$ such that
  for all sequences of operations $R,S,T$ and
  every states $A,B$, we have that
  $A \tounder{Rp Sp_D T} B$ and
  $A \tounder{RST} B$.
\end{definition}

We now consider a data-collector $(\X,\pi,\pi_D)$ whose state transition function
can be described by an abstract data structure. We assume that
each session $s$ of protocol $\pi$ in an execution correspond to a sequence of operations $\Pi^s$ in the abstract
data structure.
Naturally, we ask that $\pi_D$ defines a sequence of operations
$\Pi_D^s$ that reverses the corresponding sequence $\Pi^s$ in the ADS, as in
Definition~\ref{def:del op}. We show that if $\X$ implements this ADS
in a history independent way, then it is \del.

The simulation strategy presented in the proof of Theorem~\ref{thm:hist-indep} requires the data collector to be deterministic. This is not so much a restriction since~\cite{hartline_characterizing_2005} have shown that strongly history independent data structures must be deterministically determined by their content, up to some random initialization (e.g. the choice of a hash function). 

\begin{theorem}\label{thm:hist-indep}
  Let $(\X,\pi,\pi_D)$ be a data-collector such that
  \begin{enumerate}
  \item $\X$ is an implementation of an abstract data structure whose states are the possible values for the work tape of $\X$,
  \item $\X$ is deterministic up to some randomness used for initialization that does change the views of the other machines,
  \item to each session of protocol $\pi$ corresponds a sequence of state transitions $T$ for the abstract data structure, and
  \item for each session of $\pi$ with state transition $T$, the corresponding session of $\pi_D$ is a deletion operation $T_D$ as defined in Definition~\ref{def:del op}.
  \end{enumerate}
  If $\X$ is history-independent, then $\X$ is \del.
\end{theorem}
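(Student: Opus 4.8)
The plan is to construct a simulator in the spirit of the one sketched after Theorem~\ref{thm:example-collector}. On input $view_\Z^\X$, the simulator $\Ss$ instantiates a fresh copy of $\X$ with freshly sampled initialization randomness, then replays, in the order in which they appear in $view_\Z^\X$, only the sessions of $\pi$ and $\pi_D$ that $\Z$ initiated, feeding this fresh copy the incoming messages recorded in $view_\Z^\X$; it outputs the resulting work tape. Since $view_\Z^\X$ is by definition restricted to the $\Z$--$\X$ interactions, it contains exactly $\Z$'s sessions, and by condition~3 each session induces a fixed sequence of ADS operations determined by its messages. As $\X$ is deterministic (condition~2), the simulator reproduces on $\Z$'s sessions precisely the same ADS operations, in the same order, as the real execution does.

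First I would record the real execution as a single sequence $W$ of ADS operations, namely the activation-order interleaving of the operation sequences induced by all $\Y$-- and $\Z$--initiated sessions. The structural fact I would exploit is that after the terminate phase every $\Y$--initiated session of $\pi$, inducing some transition sequence $T$, has been matched later in $W$ by its session of $\pi_D$, inducing the deletion operation $T_D$ of condition~4. Treating each such matched pair as a $(p,p_D)$ pair, I would invoke Definition~\ref{def:del op} repeatedly to excise all of $\Y$'s pairs from $W$. Because Definition~\ref{def:del op} holds for arbitrary surrounding operation sequences, every excision preserves the final ADS state no matter what $\Z$ does in between -- in particular even when $\Z$ issues a deletion that happens to target one of $\Y$'s entries. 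The surviving sequence $W'$ is exactly the subsequence of $\Z$'s operations, with the same labels and order, which is precisely what the simulator replays.

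Next I would invoke strong history independence to move from ADS states to memory representations. Both $W$ and $W'$ carry the initial ADS state $A$ to the same final state $B$, so Definition~\ref{def:history-independence} gives that, from any fixed memory representation $a$ of $A$, the two sequences induce the same distribution over memory representations of $B$; determinism of $\X$ given its initialization makes each distribution a point mass, so $W$ and $W'$ reach the identical memory representation from every $a$. Thus the real state equals the state obtained by running $W'$ from the real initialization, while the simulator outputs the state obtained by running the same $W'$ from a fresh initialization. I would then use condition~2 -- the initialization randomness leaves every machine's view unchanged, hence is independent of $view_\Z^\X$ -- to argue that, conditioned on any value $v$ of $view_\Z^\X$, the real initialization is distributed identically to the simulator's fresh one. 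Applying the same deterministic map determined by $v$ to identically distributed initializations yields identically distributed states for every $v$, so the joint distributions $(state_\X,view_\Z^\X)$ and $(\Ss(view_\Z^\X),view_\Z^\X)$ coincide exactly and $(\X,\pi,\pi_D)$ is statistically \del{} with \dels{} error $0$.

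The step I expect to be the main obstacle is reconciling the $\Z$--operations the simulator generates, from a state that never held any of $\Y$'s entries, with the $\Z$--operations left in $W'$ after excising $\Y$'s pairs from the real state. This is exactly where the universal quantification in Definition~\ref{def:del op} does the work: it licenses removing a matched pair irrespective of the intervening operations, so that $W'$ carries the same operation labels as the simulator's replay even in the delicate case where a $\Z$--deletion would have behaved differently in the presence of $\Y$'s data. A secondary point needing care is the use of condition~2 to decouple $\X$'s initialization randomness from $view_\Z^\X$, which is what upgrades the comparison of real and simulated states from a marginal to a joint statement.
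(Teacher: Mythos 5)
Your proposal is correct and follows essentially the same route as the paper's proof: the same simulator (fresh initialization, deterministic replay of only the $\Z$--initiated sessions from $view_\Z^\X$), the same repeated application of Definition~\ref{def:del op} to excise $\Y$'s matched $(T,T_D)$ pairs from the real operation sequence, and the same invocation of strong history independence (Definition~\ref{def:history-independence}) to conclude that the two sequences yield identically distributed memory representations, hence \dels{} error $0$. Your explicit coupling of the initialization randomness conditioned on $view_\Z^\X$ is just a more careful rendering of the paper's brief assertion that determinism after initialization makes the simulated state consistent with the view of $\Z$.
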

\begin{proof}
  During an execution between $\X$, $\Y$ and $\Z$, a list of protocol
  sessions are initiated with $\X$. Let $\Pi = (\tau_1,\ldots,\tau_N)$
  be this list where each $\tau_i$ is the session ID of an instance of $\pi$ or $\pi_D$.
   To each $\tau_i$ we associate the
  sequence of state transitions $T_i$ that represent the change in the
  state of $\X$ resulting from the execution of $\tau_i$, i.e. such that
  \begin{equation}
    \label{eq:2}
    state_\X^{i}\tounder {T_i} state_\X^{i+1}
  \end{equation}
  where $state^k_\X$ is the state of $\X$ after the $k$th protocol $\tau_k$.
  Let $L\subset [N]$ be the indices such that $i\in L$ if and only if
  $\tau_i$ was initiated by $\Y$. Since $\Y$
  is the deletion-requester, for every $\tau_i$ ($i\in L$) that corresponds to a
  session of $\pi$, there is a $j>i$ such that $\tau_j$ is the
  corresponding session of $\pi_D$. Let $R\subset [N]\times [N]$ be
  the set of pairs $(i,j)$ with $i\in L$ such that $\tau_j$ is the session of
  $\pi_D$ that corresponds to session $\tau_i$ of $\pi$ as described
  above. For every $(i,j)\in R$, the state transitions $T_i$ and $T_j$
  corresponding to $\tau_i$ and $\tau_j$ satisfy Def.~\ref{def:del
    op}, i.e.\ $T_j$ undoes $T_i$ in the ADS.

The simulator $\S$ of Definition~\ref{def:weak del} first initializes $\X$
  with a fresh random tape and
 simulates $\X$ by invoking protocols $\pi$ and $\pi_D$ initiated by $\Z$ using the contents of the incoming and outgoing tapes from the view of $\Z$. Since $\X$ is deterministic after initialization, this simulation will lead to a state for $\X$ that is consistent with the view of $\Z$.
 Using the notation introduced
  above, $\Ss$ simulates the execution of protocol sessions $\tau_i$
  for $i\in{[N]\setminus L}$. This sequence contains the same
  protocols as in the real execution, except for the protocols $\pi$ and $\pi_D$
   initiated by $\Y$.

  We compare the simulation of $\Ss$ using $view_\Z^\X$ to the real
  execution between $\X$, $\Y$ and $\Z$. 
  In the real execution, protocol sessions $\tau_1,\dots,\tau_N$ are
  executed sequentially, resulting in state transitions
  $T_1,\dots,T_N$. Let $X$ denote the state of the ADS implemented by
  $\X$ such that $\emptyset \tounder {T_1\dots T_N} X$. By the strong
  history independence of $\X$ (Def.~\ref{def:history-independence})
  and the definition of deletion for abstract data structures
  (Def.~\ref{def:del op}), we have that for every $(i,j)\in R$ the
  sequence of state transitions
  $T_1,\dots,T_{i-1},T_{i+1},\dots,T_{j-1},T_{j+1},\dots,T_N$ also
  maps the initial state $\emptyset$ to state $X$. Therefore, if we
  let $T_{[N]\setminus L}$ denote the
  state transitions for every $i\notin L$ (i.e.\ only for the sessions
  $\tau_i$ initiated by $\Z$), we have that
  $\emptyset \tounder {T_{[N]\setminus L}} X$. By the history
  independence of $\X$, the internal representation $state_\X$ of $\X$ for the ADS
  state $X$ resulting from $T_1,\dots,T_N$ in the real execution is
  identically distributed to the state $\S(view_\Z)$ in the simulated
  execution resulting from $T_{[N]\setminus L}$. We have thus shown
  that for every distinguisher $\D$,
  $$
    \PR{\D,{state_{\X},view_\Z},{\Ss(view_{\Z}^\X),view_\Z}} =0 \enspace.
    $$
\qed\end{proof}

\subsubsection{Proof of Theorem~\ref{thm:example-collector}}
\label{sec:proof-theorem}

Using Theorem~\ref{thm:hist-indep}, it is very easy to prove that the data-collector of Section~\ref{sec:exampl-delet-compl} is \del{}. We only need to show that it satisfies all the requirements of Theorem~\ref{thm:hist-indep}. Let $\X$ be the data-collector of Fig.~\ref{fig:msg-board},
\begin{enumerate}
\item it implements a history-independent list: its state is the state of the list;
\item it is deterministic;
\item each protocol $\pi_{\tt post}$, $\pi_{\tt fetch}$ and $\pi_D$ corresponds to a (possibly empty) transition in the list abstract data structure; and
\item for some $k$, each execution of $\pi_D(k)$ triggers the ADS operation $\tt list.remove$ of the corresponding ADS operation $\tt list.insert$ triggered by $\pi_{\tt post}(k,\cdot)$.
\end{enumerate}

\section{Conclusion \& Open Questions}
\label{sec:conclusion}

We have shown that the concept of compliance to the ``right to be forgotten'' is compatible with formalisms that do not necessarily provide privacy from third parties. Under our new definition, a data collector is able to prove that it has forgotten the deletion requester's data by showing that its state is consistent with having only interacted with the other users of the system.

An interesting question is to further study the interplay of privacy and deletion-compliance. For example, if a protocol $\pi$ is private even against a malicious data collector for some appropriate notion of privacy, then does that imply that any data collector is deletion-compliant when deletion requesters interact through $\pi$? 

Despite our more permissive definition to allow a larger class of data collectors, there are still natural situations that fall outside of that class. For example, a data collector that encrypts the data it collects and merely throws away the key upon deletion should intuitively satisfy the notion of computational deletion-compliance. However, the difference in size in the state of the data collector in the cases where $\Y$'s data is present or not allows to distinguish both cases. We have proposed a potential solution to this problem -- giving the simulator a \emph{masked} view of $\Y$ -- that could form the basis of future work.

\printbibliography{}

@inproceedings{GGV,
	location = {Cham},
	title = {Formalizing Data Deletion in the Context of the Right to Be Forgotten},
	isbn = {978-3-030-45724-2},
	doi = {10.1007/978-3-030-45724-2_13},
	series = {Lecture Notes in Computer Science},
	abstract = {The right of an individual to request the deletion of their personal data by an entity that might be storing it  referred to as the right to be forgotten  has been explicitly recognized, legislated, and exercised in several jurisdictions across the world, including the European Union, Argentina, and California. However, much of the discussion surrounding this right offers only an intuitive notion of what it means for it to be fulfilled  of what it means for such personal data to be deleted.In this work, we provide a formal definitional framework for the right to be forgotten using tools and paradigms from cryptography. In particular, we provide a precise definition of what could be (or should be) expected from an entity that collects individuals’ data when a request is made of it to delete some of this data. Our framework captures most, though not all, relevant aspects of typical systems involved in data processing. While it cannot be viewed as expressing the statements of current laws (especially since these are rather vague in this respect), our work offers technically precise definitions that represent possibilities for what the law could reasonably expect, and alternatives for what future versions of the law could explicitly require.Finally, with the goal of demonstrating the applicability of our framework and definitions, we consider various natural and simple scenarios where the right to be forgotten comes up. For each of these scenarios, we highlight the pitfalls that arise even in genuine attempts at implementing systems offering deletion guarantees, and also describe technological solutions that provably satisfy our definitions. These solutions bring together techniques built by various communities.},
	pages = {373--402},
	booktitle = {Advances in Cryptology  {EUROCRYPT} 2020},
	publisher = {Springer International Publishing},
	author = {Garg, Sanjam and Goldwasser, Shafi and Vasudevan, Prashant Nalini},
	editor = {Canteaut, Anne and Ishai, Yuval},
	date = {2020},
	langid = {english}
}

@article{hartline_characterizing_2005,
	title = {Characterizing History Independent Data Structures},
	volume = {42},
	issn = {1432-0541},
	url = {https://doi.org/10.1007/s00453-004-1140-z},
	doi = {10.1007/s00453-004-1140-z},
	abstract = {We consider history independent data structures as proposed for study by Naor and Teague.  In a history independent data structure, nothing can be learned from the memory representation of the data structure except for what is available from the abstract data structure.  We show that for the most part, strong history independent data structures have canonical representations.  We provide a natural alternative definition of strong history independence that is less restrictive than Naor and Teague and characterize how it restricts allowable representations.  We also give a general formula for creating dynamically resizing history independent data structures and give a related impossibility result.},
	pages = {57--74},
	number = {1},
	journaltitle = {Algorithmica},
	shortjournal = {Algorithmica},
	author = {Hartline, Jason D. and Hong, Edwin S. and Mohr, Alexander E. and Pentney, William R. and Rocke, Emily C.},
	urldate = {2020-09-24},
	date = {2005-05-01},
	langid = {english}
}

@inproceedings{NT-history,
author = {Naor, Moni and Teague, Vanessa},
title = {Anti-Persistence: History Independent Data Structures},
year = {2001},
isbn = {1581133499},
publisher = {Association for Computing Machinery},
address = {New York, NY, USA},
url = {https://doi.org/10.1145/380752.380844},
doi = {10.1145/380752.380844},
abstract = {Many data structures give away much more information than they were intended to. Whenever privacy is important, we need to be concerned that it might be possible to infer information from the memory representation of a data structure that is not available through its “legitimate” interface. Word processors that quietly maintain old versions of a document are merely the most egregious example of a general problem.We deal with data structures whose current memory representation does not reveal their history. We focus on dictionaries, where this means revealing nothing about the order of insertions or deletions. Our first algorithm is a hash table based on open addressing, allowing O(1) insertion and search.  We also present a history independent dynamic  perfect hash table that uses space linear in the number of elements inserted and has expected amortized insertion and deletion time O(1).  To solve the dynamic perfect hashing problem we devise a general scheme for history independent memory allocation. For fixed-size records this is quite efficient, with insertion and deletion both linear in the size of the record.  Our variable-size record scheme is efficient enough for dynamic perfect hashing but not for general use.  The main open problem we leave is whether it is possible to implement a variable-size record scheme with low overhead.},
booktitle = {Proceedings of the Thirty-Third Annual ACM Symposium on Theory of Computing},
pages = {492–501},
numpages = {10},
keywords = {security, privacy, anti-persistence, history independence, algorithms, hash table, data structures},
location = {Hersonissos, Greece},
series = {STOC '01}
}

@article{nissim_bridging_2017,
	title = {Bridging the Gap between Computer Science and Legal Approaches to Privacy},
	volume = {31},
	url = {https://heinonline.org/HOL/Page?handle=hein.journals/hjlt31&id=705&div=&collection=},
	pages = {687},
	journaltitle = {Harvard Journal of Law \& Technology (Harvard {JOLT})},
	shortjournal = {Harv. J. L. \& Tech.},
	author = {Nissim, Kobbi and Bembenek, Aaron and Wood, Alexandra and Bun, Mark and Gaboardi, Marco and Gasser, Urs and O'Brien, David R. and Steinke, Thomas and Vadhan, Salil},
	date = {2017}
}

@inproceedings{simeonovski_oblivion_2015,
	location = {Cham},
	title = {Oblivion: Mitigating Privacy Leaks by Controlling the Discoverability of Online Information},
	isbn = {978-3-319-28166-7},
	doi = {10.1007/978-3-319-28166-7_21},
	series = {Lecture Notes in Computer Science},
	shorttitle = {Oblivion},
	abstract = {Search engines are the prevalently used tools to collect information about individuals on the Internet. Search results typically comprise a variety of sources that contain personal information  either intentionally released by the person herself, or unintentionally leaked or published by third parties without being noticed, often with detrimental effects on the individual’s privacy. To grant individuals the ability to regain control over their disseminated personal information, the European Court of Justice recently ruled that {EU} citizens have a right to be forgotten in the sense that indexing systems, such as Google, must offer them technical means to request removal of links from search results that point to sources violating their data protection rights. As of now, these technical means consist of a web form that requires a user to manually identify all relevant links herself upfront and to insert them into the web form, followed by a manual evaluation by employees of the indexing system to assess if the request to remove those links is eligible and lawful.In this work, we propose a universal framework Oblivion to support the automation of the right to be forgotten in a scalable, provable and privacy-preserving manner. First, Oblivion enables a user to automatically find and tag her disseminated personal information using natural language processing ({NLP}) and image recognition techniques and file a request in a privacy-preserving manner. Second, Oblivion provides indexing systems with an automated and provable eligibility mechanism, asserting that the author of a request is indeed affected by an online resource. The automated eligibility proof ensures censorship-resistance so that only legitimately affected individuals can request the removal of corresponding links from search results. We have conducted comprehensive evaluations of Oblivion, showing that the framework is capable of handling 278 removal requests per second on a standard notebook (2.5 {GHz} dual core), and is hence suitable for large-scale deployment.},
	pages = {431--453},
	booktitle = {Applied Cryptography and Network Security},
	publisher = {Springer International Publishing},
	author = {Simeonovski, Milivoj and Bendun, Fabian and Asghar, Muhammad Rizwan and Backes, Michael and Marnau, Ninja and Druschel, Peter},
	editor = {Malkin, Tal and Kolesnikov, Vladimir and Lewko, Allison Bishop and Polychronakis, Michalis},
	date = {2015},
	langid = {english},
	keywords = {Data protection, {EU} legislation, Information discoverability, Privacy, Right to be forgotten, Search engines}
	}

@inproceedings{wei_reliably_2011,
	location = {{USA}},
	title = {Reliably erasing data from flash-based solid state drives},
	isbn = {978-1-931971-82-9},
	series = {{FAST}'11},
	abstract = {Reliably erasing data from storage media (sanitizing the media) is a critical component of secure data management. While sanitizing entire disks and individual files is well-understood for hard drives, flash-based solid state disks have a very different internal architecture, so it is unclear whether hard drive techniques will work for {SSDs} as well. We empirically evaluate the effectiveness of hard drive-oriented techniques and of the {SSDs}' built-in sanitization commands by extracting raw data from the {SSD}'s flash chips after applying these techniques and commands. Our results lead to three conclusions: First, built-in commands are effective, but manufacturers sometimes implement them incorrectly. Second, overwriting the entire visible address space of an {SSD} twice is usually, but not always, sufficient to sanitize the drive. Third, none of the existing hard drive-oriented techniques for individual file sanitization are effective on {SSDs}. This third conclusion leads us to develop flash translation layer extensions that exploit the details of flash memory's behavior to efficiently support file sanitization. Overall, we find that reliable {SSD} sanitization requires built-in, verifiable sanitize operations.},
	pages = {8},
	booktitle = {Proceedings of the 9th {USENIX} conference on File and stroage technologies},
	publisher = {{USENIX} Association},
	author = {Wei, Michael and Grupp, Laura M. and Spada, Frederick E. and Swanson, Steven},
	urldate = {2021-02-08},
	date = {2011-02-15}
}

@article{cohen_towards_2020,
	title = {Towards formalizing the {GDPR}’s notion of singling out},
	volume = {117},
	rights = {Copyright © 2020 the Author(s). Published by {PNAS}.. https://creativecommons.org/licenses/by-nc-nd/4.0/This open access article is distributed under Creative Commons Attribution-{NonCommercial}-{NoDerivatives} License 4.0 ({CC} {BY}-{NC}-{ND}).},
	issn = {0027-8424, 1091-6490},
	url = {https://www.pnas.org/content/117/15/8344},
	doi = {10.1073/pnas.1914598117},
	abstract = {There is a significant conceptual gap between legal and mathematical thinking around data privacy. The effect is uncertainty as to which technical offerings meet legal standards. This uncertainty is exacerbated by a litany of successful privacy attacks demonstrating that traditional statistical disclosure limitation techniques often fall short of the privacy envisioned by regulators. We define “predicate singling out,” a type of privacy attack intended to capture the concept of singling out appearing in the General Data Protection Regulation ({GDPR}). An adversary predicate singles out a dataset x using the output of a data-release mechanism M(x)M(x){\textless}mml:math xmlns:mml="http://www.w3.org/1998/Math/{MathML}"{\textgreater}{\textless}mml:mi{\textgreater}M{\textless}/mml:mi{\textgreater}{\textless}mml:mrow{\textgreater}{\textless}mml:mo{\textgreater}({\textless}/mml:mo{\textgreater}{\textless}mml:mrow{\textgreater}{\textless}mml:mi mathvariant="bold"{\textgreater}x{\textless}/mml:mi{\textgreater}{\textless}/mml:mrow{\textgreater}{\textless}mml:mo{\textgreater}){\textless}/mml:mo{\textgreater}{\textless}/mml:mrow{\textgreater}{\textless}/mml:math{\textgreater} if it finds a predicate p matching exactly one row in x with probability much better than a statistical baseline. A data-release mechanism that precludes such attacks is “secure against predicate singling out” ({PSO} secure). We argue that {PSO} security is a mathematical concept with legal consequences. Any data-release mechanism that purports to “render anonymous” personal data under the {GDPR} must prevent singling out and, hence, must be {PSO} secure. We analyze the properties of {PSO} security, showing that it fails to compose. Namely, a combination of more than logarithmically many exact counts, each individually {PSO} secure, facilitates predicate singling out. Finally, we ask whether differential privacy and k-anonymity are {PSO} secure. Leveraging a connection to statistical generalization, we show that differential privacy implies {PSO} security. However, and in contrast with current legal guidance, k-anonymity does not: There exists a simple predicate singling out attack under mild assumptions on the k-anonymizer and the data distribution.},
	pages = {8344--8352},
	number = {15},
	journaltitle = {Proceedings of the National Academy of Sciences},
	shortjournal = {{PNAS}},
	author = {Cohen, Aloni and Nissim, Kobbi},
	urldate = {2021-02-09},
	date = {2020-04-14},
	langid = {english},
	pmid = {32234789},
	note = {Publisher: National Academy of Sciences
Section: Physical Sciences},
	keywords = {differential privacy, {GDPR}, k-anonymity, privacy, singling out}
	}

@article{goldwasser_knowledge_1989,
	title = {The Knowledge Complexity of Interactive Proof Systems},
	volume = {18},
	issn = {0097-5397},
	url = {https://epubs.siam.org/doi/10.1137/0218012},
	doi = {10.1137/0218012},
	abstract = {Usually, a proof of a theorem contains more knowledge than the mere fact that the theorem is true. For instance, to prove that a graph is Hamiltonian it suffices to exhibit a Hamiltonian tour in it; however, this seems to contain more knowledge than the single bit Hamiltonian/non-Hamiltonian.In this paper a computational complexity theory of the “knowledge” contained in a proof is developed. Zero-knowledge proofs are defined as those proofs that convey no additional knowledge other than the correctness of the proposition in question. Examples of zero-knowledge proof systems are given for the languages of quadratic residuosity and 'quadratic nonresiduosity. These are the first examples of zero-knowledge proofs for languages not known to be efficiently recognizable.},
	pages = {186--208},
	number = {1},
	journaltitle = {{SIAM} Journal on Computing},
	shortjournal = {{SIAM} J. Comput.},
	author = {Goldwasser, Shafi and Micali, Silvio and Rackoff, Charles},
	urldate = {2021-02-09},
	date = {1989-02-01},
	note = {Publisher: Society for Industrial and Applied Mathematics}
	}

@inproceedings{dwork_calibrating_2006,
	location = {Berlin, Heidelberg},
	title = {Calibrating Noise to Sensitivity in Private Data Analysis},
	isbn = {978-3-540-32732-5},
	doi = {10.1007/11681878_14},
	series = {Lecture Notes in Computer Science},
	abstract = {We continue a line of research initiated in [10,11]on privacy-preserving statistical databases. Consider a trusted server that holds a database of sensitive information. Given a query function f mapping databases to reals, the so-called true answer is the result of applying f to the database. To protect privacy, the true answer is perturbed by the addition of random noise generated according to a carefully chosen distribution, and this response, the true answer plus noise, is returned to the user.Previous work focused on the case of noisy sums, in which f = ∑ i g(x i ), where x i denotes the ith row of the database and g maps database rows to [0,1]. We extend the study to general functions f, proving that privacy can be preserved by calibrating the standard deviation of the noise according to the sensitivity of the function f. Roughly speaking, this is the amount that any single argument to f can change its output. The new analysis shows that for several particular applications substantially less noise is needed than was previously understood to be the case.The first step is a very clean characterization of privacy in terms of indistinguishability of transcripts. Additionally, we obtain separation results showing the increased value of interactive sanitization mechanisms over non-interactive.},
	pages = {265--284},
	booktitle = {Theory of Cryptography},
	publisher = {Springer},
	author = {Dwork, Cynthia and {McSherry}, Frank and Nissim, Kobbi and Smith, Adam},
	editor = {Halevi, Shai and Rabin, Tal},
	date = {2006},
	langid = {english},
	keywords = {Laplace Distribution, Privacy Breach, Query Function, Semantic Security, True Answer}
}

@TECHREPORT{Samarati98protectingprivacy,
    author = {Pierangela Samarati and Latanya Sweeney},
    title = {Protecting Privacy when Disclosing Information: k-Anonymity and Its Enforcement through Generalization and Suppression},
    institution = {},
    year = {1998}
}

@inproceedings{shokri_membership_2017,
	title = {Membership inference attacks against machine learning models},
	url = {https://ieeexplore.ieee.org/stamp/stamp.jsp?arnumber=7958568},
	pages = {3--18},
	booktitle = {2017 {IEEE} Symposium on Security and Privacy ({SP})},
	publisher = {{IEEE}},
	author = {Shokri, Reza and Stronati, Marco and Song, Congzheng and Shmatikov, Vitaly},
	date = {2017}
}

@inproceedings{wang_beyond_2019,
	title = {Beyond Inferring Class Representatives: User-Level Privacy Leakage From Federated Learning},
	doi = {10.1109/INFOCOM.2019.8737416},
	shorttitle = {Beyond Inferring Class Representatives},
	abstract = {Federated learning, i.e., a mobile edge computing framework for deep learning, is a recent advance in privacy-preserving machine learning, where the model is trained in a decentralized manner by the clients, i.e., data curators, preventing the server from directly accessing those private data from the clients. This learning mechanism significantly challenges the attack from the server side. Although the state-of-the-art attacking techniques that incorporated the advance of Generative adversarial networks ({GANs}) could construct class representatives of the global data distribution among all clients, it is still challenging to distinguishably attack a specific client (i.e., user-level privacy leakage), which is a stronger privacy threat to precisely recover the private data from a specific client. This paper gives the first attempt to explore user-level privacy leakage against the federated learning by the attack from a malicious server. We propose a framework incorporating {GAN} with a multi-task discriminator, which simultaneously discriminates category, reality, and client identity of input samples. The novel discrimination on client identity enables the generator to recover user specified private data. Unlike existing works that tend to interfere the training process of the federated learning, the proposed method works “invisibly” on the server side. The experimental results demonstrate the effectiveness of the proposed attacking approach and the superior to the state-of-the-art.},
	eventtitle = {{IEEE} {INFOCOM} 2019 - {IEEE} Conference on Computer Communications},
	pages = {2512--2520},
	booktitle = {{IEEE} {INFOCOM} 2019 - {IEEE} Conference on Computer Communications},
	author = {Wang, Zhibo and Song, Mengkai and Zhang, Zhifei and Song, Yang and Wang, Qian and Qi, Hairong},
	date = {2019-04},
	note = {{ISSN}: 2641-9874},
	keywords = {class representatives, Computational modeling, data curators, Data models, data privacy, Data privacy, deep learning, federated learning, Gallium nitride, {GAN}, generative adversarial networks, global data distribution, learning (artificial intelligence), mobile computing, mobile edge computing, multitask discriminator, neural nets, Privacy, privacy threat, privacy-preservation machine learning, Servers, Training, user-level privacy leakage}
}

@article{al-rubaie_privacy-preserving_2019,
	title = {Privacy-Preserving Machine Learning: Threats and Solutions},
	volume = {17},
	issn = {1558-4046},
	doi = {10.1109/MSEC.2018.2888775},
	shorttitle = {Privacy-Preserving Machine Learning},
	abstract = {For privacy concerns to be addressed adequately in today's machine-learning ({ML}) systems, the knowledge gap between the {ML} and privacy communities must be bridged. This article aims to provide an introduction to the intersection of both fields with special emphasis on the techniques used to protect the data.},
	pages = {49--58},
	number = {2},
	journaltitle = {{IEEE} Security Privacy},
	author = {Al-Rubaie, Mohammad and Chang, J. Morris},
	date = {2019-03},
	note = {Conference Name: {IEEE} Security Privacy},
	keywords = {Computational modeling, Data models, data privacy, data protection, Feature extraction, Image reconstruction, knowledge gap, learning (artificial intelligence), machine-learning systems, {ML} systems, Predictive models, privacy communities, privacy concerns, privacy-preserving machine learning, Testing, Training}
	}

@report{derler_i_2019,
	title = {I Want to Forget: Fine-Grained Encryption with Full Forward Secrecy in the Distributed Setting},
	url = {https://eprint.iacr.org/2019/912},
	shorttitle = {I Want to Forget},
	abstract = {Managing sensitive data in highly-distributed environments is gaining a lot of attention recently. Often, once data is presented to such environments, this data is persistent there. Being able to "forget" in such environments constitutes a very desired feature due to data security and privacy issues. In particular, applying the European General Data Protection Regulation ({GDPR}), the "Right to be Forgotten" essentially became a data owner right.

In this work, we seek for cryptographic solutions that offer the possibility to willfully lose access to data in distributed environments (which can be seen equivalent to removing that data). We argue that simple encryption mechanisms do not suffice to cover all desired requirements and provide a solution that offers several strong security and privacy features. In particular, our solution achieves forward secrecy for all participants in the system (i.e., even when user keys leak), ensures strong privacy against public observers of the system (i.e., key anonymity against tracking), and enables fine-grained access control. Having those features in parallel was unknown from the cryptographic literature.

We base our solution on a novel cryptographic primitive we dub Identity-Based Puncturable Encryption ({IBPE}) which significantly enhances previous ideas on Puncturable Encryption ({PE}) due to Green and Miers ({IEEE} S\&P 2015) and Günther et al. ({EUROCRYPT} 2017). We argue that black-box constructions from Hierarchical Identity-Based Encryption ({HIBE}) do not seem to work, albeit we do know how to construct {PE} from {HIBE}. We further introduce an important feature being crucial in the setting of always-accessible and public data, namely that of key-anonymity for {IBPE} such that an {IBPE} ciphertext reveals nothing about the encryption key. We demonstrate the feasibility of our {IBPE} construction with a practical prototype implementation. Finally, we show that {IBPE} is a very versatile tool by using it to generically instantiate forward-secret {IBE} and forward-secret digital signatures, latter also being of importance in a distributed setting.},
	number = {912},
	author = {Derler, David and Ramacher, Sebastian and Slamanig, Daniel and Striecks, Christoph},
	urldate = {2021-02-15},
	date = {2019},
	keywords = {distributed setting, forward secrecy, public-key cryptography, puncturable encryption}
}

@online{GDPR,
	title = {Regulation (EU) 2016/679 of the European Parliament and of the Council of 27 April 2016 on the protection of natural persons with regard to the processing of personal data and on the free movement of such data, and repealing Directive 95/46/EC (General Data Protection Regulation)},
	url = {https://eur-lex.europa.eu/eli/reg/2016/679/oj},
	urldate = {2021-02-11},
	langid = {english},
	shorthand={GDPR}
}

@online{CCPA,
	title = {California Consumer Privacy Act},
	url = {https://oag.ca.gov/privacy/ccpa},
	urldate = {2021-02-11},
	langid = {english},
	shorthand={CCPA}
}

@Article{carter_argentina,
  author = 	 {Edward L. Carter},
  title = 	 {Argentina's Right to be Forgotten},
  journaltitle = {Emory International Law Review},
  year = 	 {2013},
  ALTdate = 	 {},
  OPTkey = 	 {},
  OPTtranslator = {},
  OPTannotator = {},
  OPTcommentator = {},
  OPTsubtitle =  {},
  OPTtitleaddon = {},
  OPTeditor = 	 {},
  OPTeditora = 	 {},
  OPTeditorb = 	 {},
  OPTeditorc = 	 {},
  OPTjournalsubtitle = {},
  OPTissuetitle = {},
  OPTissuesubtitle = {},
  OPTlanguage =  {},
  OPToriglanguage = {},
  OPTseries = 	 {},
  volume = 	 {27},
  number = 	 {1},
  OPTeid = 	 {},
  OPTissue = 	 {1},
  OPTmonth = 	 {},
  OPTpages = 	 {},
  OPTversion = 	 {},
  OPTnote = 	 {},
  OPTissn = 	 {},
  OPTaddendum =  {},
  OPTpubstate =  {},
  OPTdoi = 	 {},
  OPTeprint = 	 {},
  OPTeprintclass = {},
  OPTeprinttype = {},
  OPTurl = 	 {},
  OPTurldate = 	 {},
  OPTannote = 	 {}
}

@inproceedings{stahlberg_threats_2007,
	location = {New York, {NY}, {USA}},
	title = {Threats to privacy in the forensic analysis of database systems},
	isbn = {978-1-59593-686-8},
	url = {https://doi.org/10.1145/1247480.1247492},
	doi = {10.1145/1247480.1247492},
	series = {{SIGMOD} '07},
	abstract = {The use of any modern computer system leaves unintended traces of expired data and remnants of users' past activities. In this paper, we investigate the unintended persistence of data stored in database systems. This data can be recovered by forensic analysis, and it poses a threat to privacy. First, we show how data remnants are preserved in database table storage, the transaction log, indexes, and other system components. Our evaluation of several real database systems reveals that deleted data is not securely removed from database storage and that users have little control over the persistence of deleted data. Second, we address the problem of unintended data retention by proposing a set of system transparency criteria: data retention should be avoided when possible, evident to users when it cannot be avoided, and bounded in time. Third, we propose specific techniques for secure record deletion and log expunction that increase the transparency of database systems, making them more resistant to forensic analysis.},
	pages = {91--102},
	booktitle = {Proceedings of the 2007 {ACM} {SIGMOD} international conference on Management of data},
	publisher = {Association for Computing Machinery},
	author = {Stahlberg, Patrick and Miklau, Gerome and Levine, Brian Neil},
	urldate = {2021-05-12},
	date = {2007-06-11},
	keywords = {privacy, forensics, transparency}
}

@inproceedings{kannan_making_2011,
	location = {{USA}},
	title = {Making programs forget: enforcing lifetime for sensitive data},
	series = {{HotOS}'13},
	shorttitle = {Making programs forget},
	abstract = {This paper introduces guaranteed data lifetime, a novel system property ensuring that sensitive data cannot be retrieved from a system beyond a specified time. The trivial way to achieve this is to "reboot"; however, this is disruptive from the user's perspective, and may not even eliminate disk copies. We discuss an alternate approach based on state re-incarnation where data expiry is completely transparent to the user, and can be used even if the system is not designed a priori to provide the property.},
	pages = {23},
	booktitle = {Proceedings of the 13th {USENIX} conference on Hot topics in operating systems},
	publisher = {{USENIX} Association},
	author = {Kannan, Jayanthkumar and Altekar, Gautam and Maniatis, Petros and Chun, Byung-Gon},
	urldate = {2021-05-12},
	date = {2011-05-09}
}

@inproceedings{reardon_sok_2013,
	title = {{SoK}: Secure Data Deletion},
	doi = {10.1109/SP.2013.28},
	shorttitle = {{SoK}},
	abstract = {Secure data deletion is the task of deleting data irrecoverably from a physical medium. In the digital world, data is not securely deleted by default; instead, many approaches add secure deletion to existing physical medium interfaces. Interfaces to the physical medium exist at different layers, such as user-level applications, the file system, the device driver, etc. Depending on which interface is used, the properties of an approach can differ significantly. In this paper, we survey the related work in detail and organize existing approaches in terms of their interfaces to physical media. We further present a taxonomy of adversaries differing in their capabilities as well as a systematization for the characteristics of secure deletion approaches. Characteristics include environmental assumptions, such as how the interface's use affects the physical medium, as well as behavioural properties of the approach such as the deletion latency and physical wear. We perform experiments to test a selection of approaches on a variety of file systems and analyze the assumptions made in practice.},
	eventtitle = {2013 {IEEE} Symposium on Security and Privacy},
	pages = {301--315},
	booktitle = {2013 {IEEE} Symposium on Security and Privacy},
	author = {Reardon, Joel and Basin, David and Capkun, Srdjan},
	date = {2013-05},
	note = {{ISSN}: 1081-6011},
	keywords = {Cryptography, Ash, Databases, File systems, Flash memory, Hardware, Magnetic memory, Media, Secure deletion}
}

@inproceedings{broadbent_quantum_2020,
	location = {Cham},
	title = {Quantum Encryption with Certified Deletion},
	isbn = {978-3-030-64381-2},
	abstract = {Given a ciphertext, is it possible to prove the deletion of the underlying plaintext? Since classical ciphertexts can be copied, clearly such a feat is impossible using classical information alone. In stark contrast to this, we show that quantum encodings enable certified deletion. More precisely, we show that it is possible to encrypt classical data into a quantum ciphertext such that the recipient of the ciphertext can produce a classical string which proves to the originator that the recipient has relinquished any chance of recovering the plaintext should the key be revealed. Our scheme is feasible with current quantum technology: the honest parties only require quantum devices for single-qubit preparation and measurements; the scheme is also robust against noise in these devices. Furthermore, we provide an analysis that is suitable in the finite-key regime.},
	pages = {92--122},
	booktitle = {Theory of Cryptography},
	publisher = {Springer International Publishing},
	author = {Broadbent, Anne and Islam, Rabib},
	editor = {Pass, Rafael and Pietrzak, Krzysztof},
	date = {2020}
}

@article{ArkemaSherr,
author = {Logan Arkema and Micah Sherr},
doi = {doi:10.2478/popets-2021-0076},
url = {https://doi.org/10.2478/popets-2021-0076},
title = {Residue-Free Computing},
journal = {Proceedings on Privacy Enhancing Technologies},
number = {4},
volume = {2021},
year = {2021},
pages = {389--405}
}

@INPROCEEDINGS{9647774,
  author =       {Godin, Jonathan and Lamontagne, Philippe},
  booktitle =    {2021 18th International Conference on Privacy, Security and
                  Trust (PST)},
  title =        {Deletion-Compliance in the Absence of Privacy},
  year =         2021,
  pages =        {1-10},
  doi =          {10.1109/PST52912.2021.9647774}
}

\end{document}